\documentclass[sigconf,authorversion]{acmart}

\usepackage{graphicx}
\usepackage{bm}
\usepackage{amsmath}
\usepackage{amsfonts}
\usepackage{amsthm}
\usepackage{dsfont}
\usepackage{subcaption}
\usepackage{cleveref}
\usepackage{thmtools, thm-restate}
\usepackage[ruled,vlined,linesnumbered]{algorithm2e}
\usepackage[normalem]{ulem}
\usepackage{enumitem}
\setlist{leftmargin=2mm,nosep,itemindent=0mm,labelsep=0.75mm}
\usepackage{booktabs}

\definecolor{darkred}{RGB}{200,0,0}

\newtheorem{theorem}{Theorem}

\newcommand{\swap}[2]{\ensuremath{#1 \rightarrow #2}} 

\newcommand{\algname}{\textsc{Sirius}}
\newcommand{\polaris}{\textsc{Polaris}}

\newcommand{\pars}[1]{\left( #1 \right)}

\newcommand{\C}{\mathcal{C}}
\let\L\relax
\newcommand{\L}{\mathcal{L}}

\newcommand{\V}{\mathcal{V}}
\newcommand{\G}{\mathcal{G}}

\newcommand{\BO}[1]{\mathcal{O}\left(#1\right)}

\newcommand{\BT}[1]{\Theta\pars{ #1 }}

\newcommand{\X}{\mathcal{X}}

\newcommand{\N}{\mathbb{N}}

\newif\ifextversion
\extversiontrue

\AtBeginDocument{%
  }

\copyrightyear{2026}
\acmYear{2026}
\setcopyright{cc}
\setcctype{by-nc-nd}
\acmConference[KDD '26]{Proceedings of the 32nd ACM SIGKDD Conference on Knowledge Discovery and Data Mining V.2}{August 09--13, 2026}{Jeju Island, Republic of Korea}
\acmBooktitle{Proceedings of the 32nd ACM SIGKDD Conference on Knowledge Discovery and Data Mining V.2 (KDD '26), August 09--13, 2026, Jeju Island, Republic of Korea}
\acmDOI{10.1145/3770855.3817950}
\acmISBN{979-8-4007-2259-2/2026/08}
\begin{document}

\title[Sampling Random Graphs from the Colored Configuration Model]{Sampling Random Graphs from the Colored Configuration Model}

\author{Leonardo Pellegrina}
\affiliation{%
  \institution{Department of Information Engineering, University of Padova}
  \streetaddress{Via Gradenigo 6b}
  \city{Padova}
  \country{Italy}
  \postcode{35129}
}
\email{leonardo.pellegrina@unipd.it}

\renewcommand{\shortauthors}{Leonardo Pellegrina}

\begin{abstract}
A fundamental step in knowledge discovery is statistically assessing data mining results. 
In network analysis, 
such evaluation compares the outcome of a given procedure 
with the outcomes obtained from randomized versions of the observed network. 
Despite its importance,
available graph null models only preserve simple characteristics of the observed graph, such as its degree sequence. 

In this paper we introduce the Colored Configuration Model (CCM), a new null model for vertex-colored multigraphs. 
Our main motivation is the study of online social networks, where the color of a user represents their side in a debate. 
The key novelty of CCM is preserving 
the Colored Degree Matrix (CDM), which encodes, for each vertex, the number of neighbors of any given color. 
Preserving the CDM allows fixing the color assortativity of all nodes, e.g., the propensity of each user to interact with other like-minded users. 
This allows testing whether a given phenomenon is explained by the observed CDM, or whether other characteristics of the network might play a key role.
Available graph null models do not preserve the CDM, 
so they cannot assess its impact on real-world tasks, such as testing the significance of network polarization measures. 
To sample from the CCM, we develop 
\algname-B, a simple baseline adapting the Metropolis-Hastings approach,
and \algname, a refined algorithm tailored to 
preserve the CDM, 
thus achieving provably faster mixing. 
In our experimental evaluation, we test \algname\ on real-world networks, 
comparing it with related network null models. 
We observed that the evaluation of the statistical significance of polarization measures with \algname\ may lead to different insights compared to available null models. 
Thus, \algname\ is an effective tool for the statistically-sound analysis of social networks. 
\end{abstract}

\begin{CCSXML}
<ccs2012>
   <concept>
       <concept_id>10002951.10003227.10003351</concept_id>
       <concept_desc>Information systems~Data mining</concept_desc>
       <concept_significance>500</concept_significance>
       </concept>
   <concept>
       <concept_id>10003752.10010061.10010069</concept_id>
       <concept_desc>Theory of computation~Random network models</concept_desc>
       <concept_significance>500</concept_significance>
       </concept>
 </ccs2012>
\end{CCSXML}

\ccsdesc[500]{Information systems~Data mining}
\ccsdesc[500]{Theory of computation~Random network models}
\keywords{Network Analysis, Hypothesis Testing, Null Model, Polarization}


\maketitle


\section{Introduction}
\label{sec:intro}

The process of knowledge discovery involves several phases designed to uncover meaningful information from complex datasets.
A critical aspect of the knowledge discovery pipeline is to avoid reporting uninteresting or spurious discoveries~\cite{tan2016introduction,PellegrinaRV19b,hamalainen2019tutorial,riondato2023statistically}, which may be due to noise or random fluctuations of the data. 
A common approach to prevent this issue is evaluating the \emph{statistical significance} of data mining results, 
leveraging the 
statistical hypothesis testing framework~\cite{fisher1935design}. 
To do so, the observations gathered from a data mining analysis are evaluated 
against their distribution under an appropriate \emph{null hypothesis}, 
i.e., a set of assumptions and a \emph{realistic} model for the underlying \emph{data generating process}: findings that strongly deviate from what expected under the null distribution are flagged as \emph{significant}; results that are likely under the null hypothesis are instead discarded.

In graph analysis, 
the result of a given investigation may be quantified by a measure;
to evaluate its significance, the same measure is repeated on a set of random networks drawn from a null model. 
The most commonly used graph null model is the Configuration Model (CM)~\cite{fosdick2018configuring}, in which edges are randomized, but the degree sequence of the nodes is kept fixed. 
If the measure obtained from the observed network is significantly different w.r.t. the randomized networks, it can confidently be concluded that it is not \emph{explained} by the properties preserved in the model (i.e., the degree sequence). 
Therefore, a key requirement of graph null models is their ability to preserve realistic properties of real-world networks. 

A topical example of network analysis, 
and the main motivation for our work, 
is the study of \emph{structural polarization} in online social networks \cite{conover2011political,cinelli2021echo,gonzalez2023asymmetric}, 
i.e., evaluating if the topology of social interactions between profiles 
has a connectivity concentrated within homogeneous subgroups, creating structural separations between users with different opinions.
This phenomenon may lead to the spread of misinformation and opinion radicalization, and is considered as one of the most pressing risks in our societies~\cite{globalriskrep}. 
The observational studies on structural polarization gather data from online social networks, 
such as collections of tweets covering a given topic; 
this rich source of information 
can be represented with vertex-colored multigraphs~\cite{garimella2018quantifying}. 
Typically, 
nodes are profiles, where
the color of a node denotes their opinion, or their side in a given debate, which may be inferred from the shared or endorsed content (e.g., hashtags included in posts); 
connections (edges) are formed when a pair of profiles interact (e.g., after an endorsement, such as a retweet, or if they follow each other). 
Several measures have been proposed to quantify structural polarization~\cite{guerra2013measure,matakos2017measuring,garimella2018quantifying}, e.g., in terms of the probability that a random walk on the vertex-colored multigraph starts and ends only traversing nodes of the same color. 
To verify whether the observed level of structural polarization is truly meaningful, it is necessary to compare the same measure on random networks from a suitable graph null model~\cite{salloum2022separating}. 

Despite the fundamental importance of this statistical assessment step, the development of realistic null models for \emph{vertex-colored} graphs received scant attention~\cite{salloum2022separating,preti2025polaris}. 
Commonly used graph null models are tailored to unlabelled graphs, and typically preserve basic properties of the observed network, 
thus are unsuited to modeling richer type of networked data.

We contribute to this important line of research with a new null model for colored multigraphs, i.e., graphs where each vertex has a color and edges may appear multiple times. 
Our model, called Colored Configuration Model (CCM), generates random graphs from an ensemble that fixes the colors of each node, and also preserves the Colored Degree Matrix (CDM), 
i.e., the number of neighbors of any color for all nodes. 
In our model, we assume that the color  is an intrinsic characteristic of each node; for instance, the opinion w.r.t. a controversial topic in a social network.
In addition, we also treat the \emph{number of colored neighbors} as an inherent trait of each node: 
this is a proxy of the node's \emph{color assortativity}~\cite{newman2003mixing}, which quantifies its tendency to mainly interact with other nodes with the same color, or its interest in content that may not align with his views, a key property in the study of homophily~\cite{mcpherson2001birds,jasny2015empirical,baumann2020modeling}.

Figure~\ref{fig:cdmexample} shows an example of two colored multigraphs with the same set of colored vertices and the same CDM. 
From the example, it is immediate to observe that both graphs share the same number of colored neighbors, for each node; for instance, the node $2$ has four red and one blue neighbors (taking into account the edge multiplicities); the node $5$ has three red neighbors. 

\begin{figure}[ht]
\begin{subfigure}{.495\textwidth}
  \centering
  \includegraphics[width=\textwidth]{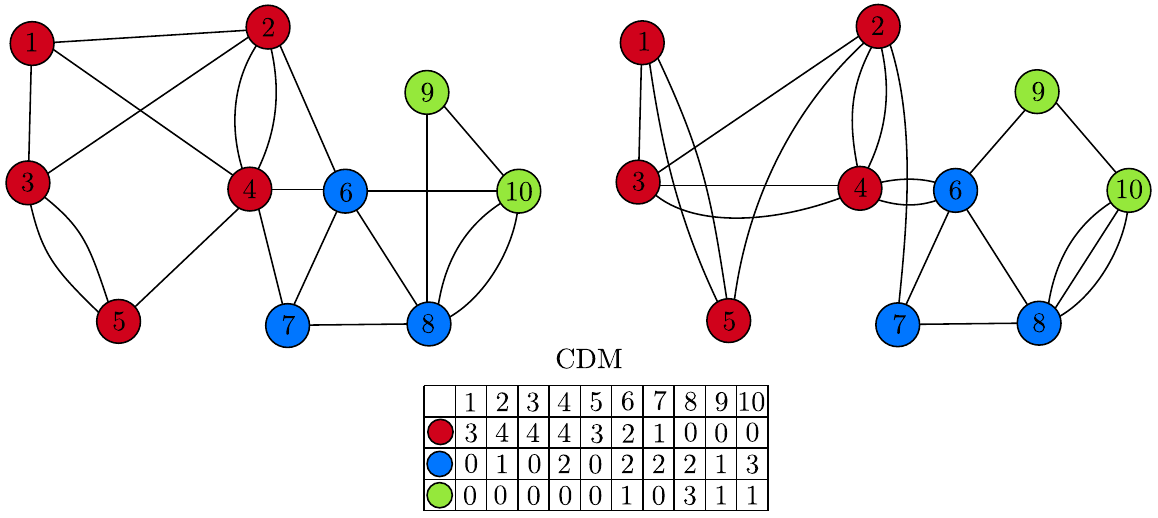} 
\end{subfigure}
\caption{  
Two colored multigraphs with the same Colored Degree Matrix (CDM).
}
\Description{This image shows an example of two colored multigraphs with the same Colored Degree Matrix (CDM).}
\label{fig:cdmexample}
\end{figure}

In a social network setting, the degree of a node quantifies its popularity, while the CDM encodes the 
tendency of each individual to interact with like-minded users, 
or their willingness to interact with users with other points of view. 
For instance, two profiles with the same innate opinion, i.e., the same color, may differ substantially in their propensity to interact with profiles with alternative views (e.g., nodes $3$ and $4$ in the example of Figure~\ref{fig:cdmexample}). 

The null model Polaris, recently presented by~\cite{preti2025polaris}, is the first extension of the CM to colored multigraphs. 
This model  
specifies an ensemble of colored multigraphs with the same Joint Color Matrix (JCM), 
that encodes the \emph{total} number of edges between any pair of colors. 
This model fixes the color assortativity \emph{globally}; 
instead, the CDM preserves this quantity \emph{locally}, i.e., simultaneously for all nodes. 
Fixing the color assortativity only at a global level will likely discard the qualitative characteristics of the local interactions, as they may produce random networks with significantly different proportions of colored neighbors.

Figure~\ref{fig:cdmjcmexample} 
shows an example of two colored multigraphs with the same degree sequence and JCM, but with different CDMs. 
From this example, we may observe that some nodes have considerably dissimilar entries in the two CDMs, i.e., have a sensible discrepancy in the number of colored neighbors. 
For instance, the node $4$ has blue and red neighbors in the left graph, while it has only red neighbors in the right graph;
or, on the contrary, the node $5$ has only red neighbors in the original graphs, while it has both red and blue neighbors in the graph on the right.

\begin{figure}[ht]
\begin{subfigure}{.495\textwidth}
  \centering
  \includegraphics[width=\textwidth]{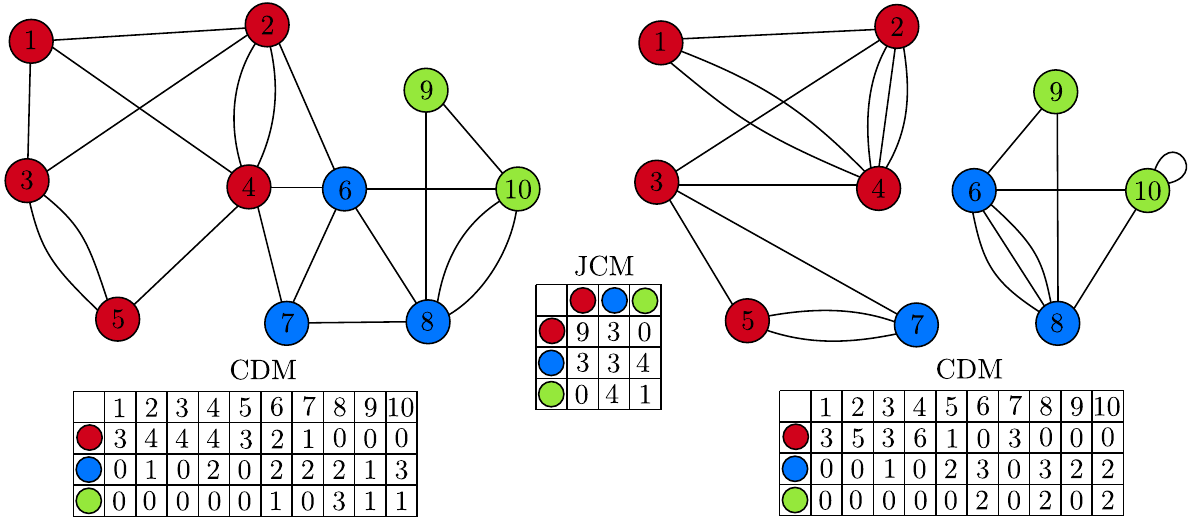} 
\end{subfigure}
\caption{  
Two colored multigraphs with the same degree sequence and Joint Color Matrix (JCM), but different CDMs.
}
\Description{This image shows an example of two colored multigraphs with the same degree sequence and Joint Color Matrix (JCM), but different CDMs. }
\label{fig:cdmjcmexample}
\end{figure}

Preserving the JCM only may generate random networks with local exposure and/or engagement to opinions that are inconsistent w.r.t. the observed graph, as this example shows. 
For instance, a given profile may be inclined to support both views on a given controversial topic, thus he may interact equally with users from two different communities; 
on randomized versions of the graph with fixed JCM the same user may be either over- or under-exposed to any of both communities. 
Depending on the goals of the statistical analysis, and the related assumptions, this behavior may be undesirable. 
Instead, random graphs from CCM preserve the type of interactions of all vertices, i.e., the \emph{local color assortativity}, allowing to test the effect of other network characteristics. 

Our main goal is to formally define a new network null model with constrained CDM, and to design algorithms to efficiently sample random graphs from the model. 
Our algorithms are the first to allow assessing the statistical significance of graph mining tasks  
while controlling for the effect of the color assortativity of each node. We now summarize our contributions:
\begin{itemize} 
\item We introduce the Colored Configuration Model (CCM), 
a new null model for vertex-colored multigraphs. 
The CCM is defined over an ensemble of multigraphs with the same Colored Degree Matrix (CDM), which specifies the number of neighbors of any color for all nodes. 
Our null model is inspired by the study of social networks, as it constraints its graph ensemble to respect the local color assortativity, i.e., the intrinsic tendency of each user in creating ties with similar entities of the network. 

\item To efficiently sample from the CCM, 
we develop the \algname-B and \algname\ algorithms. 
At a high level, our algorithms leverage the Markov Chain Monte Carlo (MCMC) framework, using the Metropolis-Hastings (MH) method~\cite{mitzenmacher2017probability}. 
While \algname-B is a baseline method based on a variant of the standard MH approach to generate networks with fixed degree sequence, \algname\ is a refined approach tailored to the task. 
While both methods converge to the desired target distribution over the space of multigraphs with fixed CDM, \algname\ is provably more effective than its baseline, 
as we show a precise relationship between the mixing time of the two methods. 
We quantify the improvement factor, that depends on the proportions of the colored edges within the network. 

\item We perform an extensive experimental evaluation, considering real-world networks from several domains.
In our experiments, we verify that state-of-the-art graph null models do not preserve the CDM, and that they output networks with considerable discrepancies in the number of colored neighbors. 
We compare the performance of \algname-B with \algname. 
The results show that the latter is significantly more effective, which aligns with our theoretical results. 
Finally, we applied \algname\ to the task of assessing the statistical significance of structural polarization measures, comparing the results with state-of-the-art null models. 
Interestingly, our results indicate that preserving the CDM may or may not explain the level of polarization observed on real-world instances.  
\end{itemize}

\section{Related Works}
\label{sec:relworks}

There is a vast literature on random graph models. 
Arguably, one of the most known and studied is 
the Erd\H{o}s-R\'enyi~\cite{erdos1959random} model,
where any pair of nodes is connected by an edge with some fixed probability $p$. 
The most commonly used graph null model is the 
Configuration Model (CM)~\cite{bollobas1980probabilistic,bender1978asymptotic,fosdick2018configuring}, 
which generates random graphs with a fixed degree sequence. 
In contrast with analytically-tractable models, such as Erd\H{o}s-R\'enyi, 
CM is often the model of choice for practitioners, as several key properties of real-world graphs are constrained by the distribution of vertex degrees \cite{callaway2000network,newman2001random,larremore2011predicting}. 
Several other models have been proposed and analyzed, 
with different characteristics and goals: some examples are the 
preferential attachment~\cite{barabasi1999emergence}, 
exponential~\cite{snijders2002markov}, 
and maximum-entropy~\cite{squartini2017maximum} 
models, 
that either preserve the degree sequence only in expectation, or 
are designed to increase the probability of observing certain subgraphs. 
The use of random graph models for statistical testing have found applications 
in several scientific domains, such as 
sociology~\cite{moreno1938statistics}, 
ecology~\cite{connor1979assembly,harvey1983null}, 
system biology~\cite{milo2002network,itzkovitz2003subgraphs}, 
and economy~\cite{sundararajan2008local}.

Our algorithms address the problem of generating random graphs from an extension of CM tailored to vertex-colored multigraphs. 
Since no efficient approach for direct sampling from the space of such graphs is known, 
our methods leverage the Markov Chain Monte Carlo (MCMC) framework~\cite{mitzenmacher2017probability} and the Metropolis-Hastings (MH)~\cite{metropolis1953equation,hastings1970monte} approach. 
The MCMC is a general technique to explore the sample space with a random walk, where 
the transition probabilities are set to converge to a target distribution, such as the uniform one.  
For graph sampling, each step of the random walk modifies the current graph, while preserving the characteristics of interest (e.g., the degree sequence). 
If the random walk is long enough, i.e., a sufficiently large number of steps is performed, and if the Markov chain satisfies some properties (it is irreducible and aperiodic), 
then the random sequence of explored states converges to the desired target distribution. 
A theoretical challenge is it prove analytical bounds on the mixing time~\cite{mitzenmacher2017probability}, i.e., the number of steps required to (approximately) converge to the stationary distribution (e.g., in terms of the total variation distance). 
While some results are known~\cite{greenhill2014switch,dutta2025sampling}, they are mostly of theoretical interest, as the upper bounds to the number of iterations involve high-degree polynomials in the graph size. 
In practice, several diagnostic methods~\cite{mitzenmacher2017probability,roy2020convergence} are used to assess the convergence of the chain, such as the degree assortativity of the sequence of explored graphs ~\cite{dutta2025sampling}. 

While our goal is to propose new methods to efficiently assess the statistical significance of graph analysis tasks under realistic assumptions, e.g., structural polarization measures, 
other recent works \cite{zhu2021minimizing,haddadan2022reducing,fabbri2022rewiring,zhu2022nearly,adriaens2023minimizing,cinus2023rebalancing,coupette2023reducing} have tackled this issue from different perspectives;
for instance, by proposing modifications to the input graph (e.g., link recommendation) to reduce structural polarization.

\section{Preliminaries}
\label{sec:prelims} 
This section introduces the key concepts useful for our work. 
We denote multisets with double curly braces, e.g., $A = \{\!\!\{ a , a , b , c , c \}\!\!\}$,  
and sets with single curly braces, e.g., 
$B = \{ a , b , c \}$. 
The cardinality of a set (or a multiset) $A$ is denoted with $|A|$; in the examples above, $|A| = 5$ and $|B| = 3$. 

A \emph{colored multigraph} $G$ is defined as $G = (V , E , \L , c)$,
where $V$ is a set of $n$ nodes,
$E$ is a multiset of $m$ edges,
where each edge $(u,v)$ is an unordered pair of vertices, 
$\L$ is a set of colors,
and $c : V \rightarrow \L$ is a labelling function that assign a color to each node of $G$. 
An edge $(u,v)$ is \emph{monochromatic} when $c(u) = c(v)$, and \emph{bichromatic} when $c(u) \neq c(v)$. 
Since $E$ is a multiset, there may be multiple occurrences of the same edge $(u,v)$, which may represent multiple interactions between $u$ and $v$, or the strength of the link between $u$ and $v$. 
The multiset $E$ may also contain self-loops, i.e., edges $(u,u)$ connecting $u$ to itself. 
For any $\ell , r \in \L$, let $E_{\ell , r} = \{\!\!\{ (u , v) \in E : \{ c(u) , c(v) \} = \{ \ell , r \} \}\!\!\}$ be the multiset of edges whose endpoints have colors $\ell , r \in \L$. 
We define $w_E(u,v)$ as the number of occurences of the edge $(u,v)$ in the multiset $E$. 
Let $N_G(v)$ be the multiset of neighbors of $v$ in $G$, 
such that  $u \in N_G(v)$ for each occurence of $(u,v) \in E$;
and let $N_G^\ell(v)$ be the multiset of neighbors of $v$ with color $\ell$ in $G$, 
such that  $u \in N_G^\ell(v)$ for each occurence of $(u,v) \in E$ with $c(u) = \ell$.\footnote{Note that, following the standard convention, for each self-loop $(v,v) \in E$  
there are two occurrences of $v$ in the multisets $N(v)$ and $N_G^{c(v)}(v)$.}  
Then, the degree $d_G(v)$ of $v$ is $d_G(v) = | N_G(v) |$, and the \emph{$\ell$-colored degree} of $v$ is $d_G^\ell(v) = | N_G^\ell(v) |$. 
It holds $d_G(v) = \sum_{\ell \in \L}  d_G^\ell(v)$. 

We now define the Colored Degree Matrix (CDM), a key concept for the graph null model we introduce in this work. 
\begin{definition}
The Colored Degree Matrix (CDM) $\C_G$ of a colored multigraph $G = (V , E , \L , c)$ is a matrix $\C_G \in \N^{|\L| \times n}$ 
where every entry $\C_G[ \ell , v ]$ is equal to the $\ell$-colored degree $d_G^\ell ( v )$ of $v$ in $G$, i.e.,
\begin{align*}
\C_G[ \ell , v ]  = d_G^\ell ( v ), \forall \ell \in \L , v \in V .
\end{align*}
\end{definition}
Let $\X$ be the domain of all colored multigraphs with the same set of colored vertices $V$ of $G$. 
We define the space $\V$ as the set of all multigraphs $H$ with CDM equal to $\C_G$, such that $\V = \{ H \in \X : \C_H = \C_G \}$. 
Figure~\ref{fig:cdmexample} shows an example of two colored multigraphs 
from the same space $\V$, i.e., having the same set of colored vertices and the same CDM.

It is immediate to observe that for two graphs $G,H$ from the space $\V$ 
it holds $[ d_G(1) , d_G(2) , \dots , d_G(n) ] = [ d_H(1) , d_H(2) , \dots , d_H(n) ]$, i.e., they have the same degree sequence, since it corresponds to the column-totals of the CDM. 
However, two graphs with the same degree sequence do not necessarily have the same CDM. 
Similarly, two graphs $G,H \in \V$ have the same Joint Color Matrix (JCM)~\cite{preti2025polaris}, which quantifies the (global) color assortativity of the graph. 
The JCM $J_G \in \N^{|\L| \times |\L|}$ is defined as follows: for any $\ell , r \in \L$, each entry $J_G[ \ell , r ]$ is equal to the number of edges $(u,v)$ with $c(u) = \ell$ and $c(v) = r$. 
The matrices $J_G$ and $\C_G$ for the colored multigraph $G$ are related:
the matrix $J_G$ is obtained from the CDM $\C_G$ by summing its entries of as follows: 
\begin{align*}
J_G[\ell , \ell] = \frac{1}{2} \!\!\! \sum_{v : c(v) = \ell} \!\!\!\! \C_G[ \ell , v ] , \hspace{0.4cm} 
J_G[\ell , r] = \!\!\!\! \sum_{v : c(v) = r} \!\!\!\! \C_G[ \ell , v ] , 
\hspace{0.1cm}
\ell \neq r .
\end{align*}
While sharing the same CDM implies having the same JCM, two graphs with the same  degree sequence and the same JCM may not have the same CDM, as shown in Figure~\ref{fig:cdmjcmexample}.

Given a probability distribution $\pi$ over the space $\V$,
our goal is to design an efficient approach to sample a multigraph from $\V$ according to $\pi$. 
Our algorithms are based on the Markov Chain Monte Carlo (MCMC) scheme, and leverage the Metropolis-Hastings (MH) method to traverse the Markov chain and to guarantee that the stationary distribution is $\pi$. 
We now introduce these concepts, while in Section~\ref{sec:algo} we show how to effectively adapt them to our problem.

\textbf{MCMC and MH methods.} 
A \emph{Markov chain} can be represented by a directed, weighted graph
$\G = ( \mathcal{S} , \mathcal{E} , \omega )$, 
where $\mathcal{S}$ is a set of \emph{states} of the chain. 
For a pair of states $i,j \in \mathcal{S}$, there exist a directed edge $(i,j) \in \mathcal{E}$ if it is possible to move from $i$ to $j$ in one step.
In such a case, $j$ is a \emph{neighbor} of $i$, and $\omega(i,j)$ is the \emph{weight} of the edge $(i,j)$, with $\omega(i,j) > 0$. 
Self-loops $(i,i)$, i.e., moves from the state $i$ to the same state $i$, are allowed. 
The weights $\omega(i,j)$ of outgoing edges from a state $i$ represent the one-step transition probabilities of moving from $i$ to $j$ in one step, and satisfy $\sum_j \omega(i,j) = 1$. 
Equivalently, the Markov chain $\G$ can be represented by a \emph{transition matrix} $P$, where each entry $P_{i,j}=\omega(i,j)$ is the probability that the chain moves from state $i$ to state $j$ in one step. 

A key application of Markov chains is the task of sampling states of $\mathcal{S}$ according to a probability distribution $\pi$;   
this is achieved by performing a suitable \emph{random walk} over $\G$, and reporting in output the state of the chain after a sufficient number of steps, i.e., the \emph{mixing time} of the chain. 
Under some conditions, the sequence of states explored by the random walk converges to an \emph{unique stationary distribution}. 
We now define such conditions. 
A Markov chain $\G$ is \emph{irreducible} if and only if, for every arbitrary pair of states $i,j$, there exist a path from $i$ to $j$ in $\G$. In other words, if and only if $\G$ is strongly connected. 
$\G$ is \emph{aperiodic} if the greatest common divisor of the lengths of its cycles is $1$. 
Any finite, irreducible, and aperiodic Markov chain is \emph{ergodic}, and it has a unique stationary distribution $\bar{\pi}$. 
If the chain if ergodic, a random walk on $\G$ converges to a unique stationary distribution $\bar{\pi}$; 
however, setting the one-step transition probabilities $\omega$ such that $\bar{\pi} = \pi$ is not trivial. 

The MH approach is a generic procedure to sample a state from $\mathcal{S}$, according to $\pi$, using an arbitrary \emph{neighbor proposal distribution} $\xi$. 
For a state $i$ and a neighbor $j$ of $i$, $\xi_i(j)$ is the probability of proposing the step from $i$ to $j$, with $\sum_j \xi_i(j) = 1$. 
The MH approach proceeds as follows: 
given an arbitrary state $i \in \mathcal{S}$, a neighbor state $j$ is sampled with probability $\xi_i(j)$;
the proposed state $j$ is \emph{accepted}, i.e., the chain moves from $i$ to $j$, with probability $\min \left\{ 1 , \frac{\pi(j) \xi_j(i) }{\pi(i) \xi_i(j) } \right\} $, while it stays in state $i$ otherwise. 
Performing this random walk over an ergodic chain $\G$ guarantees that the sequence of explored states converges to the unique stationary distribution $\pi$. 

In the following sections we show how to adapt this approach to the task of sampling random graphs from the CCM. 
Given the desired space $\mathcal{S}$ and the target distribution $\pi$, 
the main steps are: define the Markov chain $\G$; proving that it is strongly connected and aperiodic; design an effective proposal distribution $\xi$.

\section{\algname: Sampling from the CCM}
\label{sec:algo}
In this section we present the CCM model and our methods to generate random networks from the CCM. 
We first define the main operations used by our methods to explore a Markov chain, whose states are multigraphs with prescribed CDM. 
Due to space constraints, all proofs are in the Appendix. 

\textbf{The CCM model.}
We define the Double Edge Swap (DES), 
a key operation to modify a graph preserving its degree sequence. 
\begin{definition}
Given a colored multigraph $G = (V , E , \L , c)$
and a pair of edges $e_1 , e_2 \in E$,
a \emph{Double Edge Swap (DES)} 
is defined as an operation to $G$
that swaps one endpoint of $e_1$ with one endpoint of $e_2$, 
modifying $e_1 , e_2$ to a new pair of edges $e_1^\prime , e_2^\prime$. 
This operation is denoted by $\swap{e_1 , e_2}{ e_1^\prime , e_2^\prime }$. 
\end{definition} 
Given a pair of edges $(u,v) , (x,y)$, there are exactly two unique DES involving them:
\swap{(u,v) , (x,y)}{(u,x) , (v,y)} and 
\swap{(u,v) , (x,y)}{(u,y) , (v,x)}. 
The multigraph $H = (V , E \setminus \{\!\!\{ e_1 , e_2 \}\!\!\} \cup \{\!\!\{ e_1^\prime , e_2^\prime \}\!\!\} , \L , c)$ is obtained by \emph{applying} the DES $\swap{e_1 , e_2}{ e_1^\prime , e_2^\prime }$ to the multigraph $G$.
If $H \neq G$ the DES is \emph{changing}, and is non-changing otherwise. 
For the multigraph on the left of Figure~\ref{fig:cdmexample}, 
for the pair of edges $(2,3)$, $(4,5)$ the two possible DESs are 
\swap{(2,3) , (4,5)}{(2,4) , (3,5)}
and 
\swap{(2,3) , (4,5)}{(2,5) , (3,4)}, which are both changing. 

The DES operation is the key step to modify a multigraph while preserving its degree sequence. 
In MCMC algorithms, applying sequences of DES is equivalent to exploring the space of multigraphs with the same degree sequence~\cite{fosdick2018configuring}. 
However, a DES may not preserve the CDM, so it is not always appropriate to explore the space of multigraphs with the same CDM. 
Therefore, we define the CDES, a DES that always preserves the CDM.

\begin{definition}
Given a colored multigraph $G = (V , E , \L , c)$
and a pair of edges $e_1 , e_2 \in E$,
a \emph{CDM-preserving Double Edge Swap (CDES)} 
is defined as 
a DES $\swap{e_1 , e_2}{ e_1^\prime , e_2^\prime }$ 
that preserves the CDM of $G$.
\end{definition}
Similarly to a DES, a CDES is \emph{changing} if the resulting multigraph $H$ is $H \neq G$, and is non-changing otherwise. 
For the multigraph on the left of Figure~\ref{fig:cdmexample}, 
the DES  
\swap{(2,6) , (4,7)}{(2,7) , (4,6)}
is a changing CDES, but  
\swap{(2,6) , (4,7)}{(2,4) , (6,7)} is a changing DES but not a CDES. 

We now describe a Markov chain for the space of multigraphs with fixed CDM. 
We define  
$\G = ( \mathcal{V} , \mathcal{E} , \omega )$
as a directed, weighted graph. 
Two states $H,Y \in \mathcal{V}$ are connected by a directed edge $(H,Y) \in \mathcal{E}$, with $\omega(H,Y) > 0$, if and only if there exist a CDES that, applied to $H$, yields $Y$. 
For a changing CDES it holds $H \neq Y$, while $H = Y$ for a non-changing CDES.
Therefore, the chain $\G$ may also contain self-loops $(H,H) \in \mathcal{E}$ if there is a non-changing CDES that can be applied to $H$.
Recall that the weight $\omega(H,Y)$ represents the probability that a random walk on the node $H$ moves to the state $Y$, such that $ \sum_{Y \in \V} \omega(H,Y) = 1$. 

We now prove that the Markov chain $\G$ defined above is irreducible and aperiodic, which implies that it is ergodic. 

\begin{theorem}
\label{thm:irreducible}
The graph $\G$ is irreducible.
\end{theorem}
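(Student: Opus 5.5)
The plan is to decompose each multigraph $H \in \V$ into colored layers and reduce irreducibility to the classical connectivity of double edge swaps for multigraphs with loops under a fixed degree sequence. For colors $\ell, r \in \L$, let $H_{\ell,r}$ be the submultigraph with edge multiset $E_{\ell,r}$.

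\textbf{Layers are independent and have fixed degrees.} Fixing the CDM fixes every layer's degree sequence. For $\ell \neq r$, $H_{\ell,r}$ is a bipartite multigraph between $V_\ell = \{v : c(v)=\ell\}$ and $V_r$, where $v \in V_\ell$ has degree $\C[r,v]$ and $u \in V_r$ has degree $\C[\ell,u]$. For $\ell = r$, $H_{\ell,\ell}$ is a multigraph with loops on $V_\ell$, where $v$ has degree $\C[\ell,v]$. Conversely, any choice of layers with these degree sequences produces a graph in $\V$. So $\V$ is the Cartesian product of the layer spaces.

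\textbf{Intra-layer swaps are CDESs.} A DES $\swap{(u,v),(x,y)}{(u,x),(v,y)}$ preserves the CDM if and only if $c(v)=c(y)$ and $c(u)=c(x)$: vertex $u$ exchanges neighbor $v$ for $x$, and so on. In particular, the following swaps are CDESs and change only one layer:
\begin{itemize}
\item in a bichromatic layer, swapping two edges $(u,v),(x,y)$ with $u,x \in V_\ell$ and $v,y \in V_r$ via $\swap{(u,v),(x,y)}{(u,y),(x,v)}$;
\item in a monochromatic layer, any DES between two of its edges.
\end{itemize}
It therefore suffices to show that each layer space is connected under these swaps. We can then transform $H$ into $Y$ one layer at a time, and the concatenated path stays in $\V$.

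\textbf{Connectivity within a layer.} I would argue by induction on the number of edges where $H_{\ell,r}$ and $Y_{\ell,r}$ differ, i.e.\ on $\sum |w_H - w_Y|$ over vertex pairs.

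In the bipartite case, pick $u$ whose incident edge multisets differ. Then $u$ has an edge $(u,y)$ with $w_Y(u,y) > w_H(u,y)$ and an edge $(u,v)$ with $w_H(u,v) > w_Y(u,v)$. Since $y$ has the same degree in both graphs and is under-supplied on $(u,y)$ in $H$, there is some $x \neq u$ with $w_H(x,y) > w_Y(x,y)$. The swap $\swap{(u,v),(x,y)}{(u,y),(x,v)}$ fixes $(u,y)$ and $(x,y)$, and creates at most one new discrepancy at $(x,v)$. The distance therefore strictly decreases, by at least 2.

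The monochromatic case with loops is the main obstacle. Here I would invoke the known result that DESs connect the space of loopy multigraphs with a given degree sequence~\cite{fosdick2018configuring}. If a self-contained argument is needed, I would use the same greedy step and handle the special cases separately: $x=y$ (a loop), $v=y$, and $u=x$. For a loop $(y,y)$, for example, the swap $\swap{(u,v),(y,y)}{(u,y),(v,y)}$ still fixes one deficit. The step that needs care is checking that a potential function, such as the distance above, strictly decreases in every case, because loops count twice toward degree.

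Combining the layers gives a directed path in $\G$ from $H$ to $Y$. Every DES is reversible, so paths exist in both directions, and $\G$ is strongly connected.
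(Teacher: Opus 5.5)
Your proposal is correct and takes essentially the same route as the paper's proof. You split each state into color-pair layers, use the known double-edge-swap connectivity for loopy multigraphs with fixed degrees (Fosdick et al.) on the monochromatic layers, transform the layers one at a time via CDESs, and conclude by reversibility; your greedy discrepancy-reduction argument also proves the bipartite case, which the paper only asserts. One harmless slip: for the swap $\swap{(u,v),(x,y)}{(u,x),(v,y)}$ the CDM-preserving condition is $c(v)=c(x)$ and $c(u)=c(y)$, since your stated condition belongs to $\swap{(u,v),(x,y)}{(u,y),(x,v)}$. This does not affect the swaps you actually use.
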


\begin{theorem}
\label{thm:aperiodic}
Given the multigraph $G$, if at least one of the following conditions holds,
then the graph $\G$ is aperiodic:
1) there exist a color $\ell$ such that there exist two monochromatic edges with color $\ell$; or
2) there exist a color $\ell$ and a node $v$ with $c(v) \neq \ell$
with $d^\ell_G(v) \geq 2$.
\end{theorem}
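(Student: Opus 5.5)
The plan is to show that, under either condition, every state $H \in \V$ has a self-loop $(H,H) \in \mathcal{E}$. A single self-loop in a strongly connected graph already makes the gcd of cycle lengths equal to $1$, and by Theorem~\ref{thm:irreducible} the chain $\G$ is strongly connected. So it is enough to exhibit, for each $H \in \V$, a non-changing CDES applicable to $H$.

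The first step is to check that the hypotheses, though stated for $G$, hold in every $H \in \V$. Both conditions depend only on the CDM, and $\C_H = \C_G$.
\begin{itemize}
\item Condition 2 reads $\C_G[\ell,v] \geq 2$ for some $v$ with $c(v) \neq \ell$, so it transfers directly.
\item For condition 1, the number of monochromatic $\ell$-edges is $J_H[\ell,\ell] = \frac{1}{2}\sum_{v:c(v)=\ell}\C_H[\ell,v]$, which equals $J_G[\ell,\ell]$. So $H$ also has at least two edges with both endpoints colored $\ell$.
\end{itemize}

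The second step builds the non-changing CDES in each case. The guiding observation is that a DES on two edges is non-changing if the new pair equals the old pair as a multiset, which happens when the two edges share an endpoint.

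\emph{Case 2.} In $H$, $v$ has at least two neighbor occurrences of color $\ell$, so there are edges $(v,x)$ and $(v,y)$ in $H$ with $c(x)=c(y)=\ell$. These may be the same edge with multiplicity, or $x=y$. Take the DES $\swap{(v,x),(v,y)}{(v,y),(v,x)}$, i.e.\ swap $x$ with $y$. The resulting multiset of edges is unchanged, so $H$ is unchanged and the CDM is trivially preserved; hence this is a non-changing CDES.

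\emph{Case 1.} Take two monochromatic edges $(a,b)$ and $(x,y)$ of color $\ell$. Every DES on them yields edges whose endpoints all have color $\ell$. Each endpoint loses one $\ell$-neighbor and gains one $\ell$-neighbor, so the CDM is preserved and both DESs are CDESs. If the two edges share an endpoint, the argument of Case 2 gives a non-changing one. If they do not, both DESs may change $H$; in that case, applying one swap and then its reverse gives a cycle of length $2$.

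That is not enough on its own, so this is the main obstacle: a length-$2$ cycle does not give gcd $1$. I would first check the precise convention for a non-changing CDES. If the paper counts the DES $\swap{(a,b),(x,y)}{(a,b),(x,y)}$ (the trivial relabelling) as a DES, then every state has a self-loop whenever at least two edges exist, and the proof ends there.

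Failing that convention, I would look for an odd cycle. The idea is a triangle of three monochromatic $\ell$-edges rotated through three states, or a combination of a $2$-cycle with a $3$-cycle built from the $\ell$-edges, giving $\gcd(2,3)=1$.

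I would also use the sum over all $v$ of $\C[\ell,v]$, which equals $2J[\ell,\ell]$ plus the bichromatic contributions, together with a pigeonhole argument. Within a fixed $H$, this should find two colored edges sharing an endpoint of the right color class, and so reduce Case 1 to the shared-endpoint situation whenever possible.
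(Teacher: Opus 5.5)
There is a genuine gap, and it sits exactly where you flag the obstacle: condition~1 when the two monochromatic $\ell$-edges $(a,b),(x,y)$ have four distinct endpoints. The rest is fine. Your treatment of condition~2 is correct, as is condition~1 when the two edges share an endpoint. The transfer of both hypotheses to every $H\in\V$ through the CDM is also correct.

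In the four-distinct-endpoint case your main strategy cannot work, because there need not be a self-loop anywhere in $\G$. Take four vertices of color $\ell$ and $E=\{\!\!\{(a,b),(x,y)\}\!\!\}$. Every vertex has $\ell$-degree $1$, so every $H\in\V$ is a perfect matching of $\{a,b,x,y\}$, with no loops or multi-edges. Hence $\V$ has exactly three states, and both DES from each state are changing.

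None of your fallbacks rescues this case:
\begin{itemize}
\item The paper defines exactly two DES per pair of edges, and both genuinely swap endpoints. The identity relabelling is therefore not a DES, so the ``convention'' escape is unavailable.
\item No pigeonhole argument can produce two $\ell$-edges sharing an endpoint in this example.
\item Your ``triangle of three monochromatic $\ell$-edges'' suggests you think an odd cycle needs three edges. The hypothesis only gives you two.
\end{itemize}

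The missing idea is that the three perfect matchings on $\{a,b,x,y\}$ already form a directed $3$-cycle in $\G$:
\begin{itemize}
\item $\swap{(a,b),(x,y)}{(a,x),(b,y)}$,
\item $\swap{(a,x),(b,y)}{(b,x),(a,y)}$,
\item $\swap{(b,x),(a,y)}{(a,b),(x,y)}$.
\end{itemize}
Each step is a DES on two monochromatic $\ell$-edges, hence a CDES. Combined with the $2$-cycle obtained by reversing any single step, this gives $\gcd(2,3)=1$. This is what the paper's proof does, through a case analysis on $|\{u,v,x,y\}|$.

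Two other disjoint sub-cases that you lumped in are easy in your framework, once you use that a single self-loop anywhere suffices by Theorem~\ref{thm:irreducible}:
\begin{itemize}
\item A self-loop $(x,x)$ disjoint from $(a,b)$: one swap yields $(a,x),(b,x)$.
\item Two disjoint self-loops $(a,a),(x,x)$: one swap yields $(a,x),(a,x)$.
\end{itemize}
In both results the two edges share an endpoint, so that state has a non-changing CDES.
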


We remark that the mild conditions stated in Theorem~\ref{thm:aperiodic} are sufficient to ensure aperiodicity of $\G$, but not necessary, since the chain may include additional self-loops (from rejections of the MH approach happening with probability $>0$); moreover, only a restricted class of graphs do not satisfy them. 
In any case, our algorithms can be adapted easily to handle instances that do not satisfy the aforementioned conditions, with the following simple modification: 
it is enough to add a self-loop to each vertex $H$ of the chain $\G$, that is chosen at random at each MCMC step with fixed probability, i.e., by considering a \emph{lazy} version of $\G$. 

In the following sections we describe our algorithms \algname\ and \algname-B to sample random graphs from the CCM model. 

\textbf{A baseline algorithm.}
We first introduce \algname-B, which is based on an extension of the standard MH approach to draw random graphs for the Configuration Model (CM)~\cite{fosdick2018configuring}. 
\algname-B is described in Algorithm~\ref{alg:baseline} (in the Appendix). 
The algorithm, for each of the $t$ iterations,
samples uniformly at random a pair $(u,v) , (x,y)$ of distinct edges of $E$ (lines~\ref{algb:sampleone} and~\ref{algb:sampletwo});
then, it samples uniformly at random one of the two possible DES involving these edges.
This is done by generating a uniform random value $p$ in the interval $[0,1]$ (i.e., by sampling from $U([0,1])$, line~\ref{algb:samplepdes}), and swapping the nodes $u,v$ if $p<1/2$ (line~\ref{algb:desswap}), before defining the DES $\swap{(u,v),(x,y)}{(u,x),(v,y)}$ in line~\ref{algb:defdef}. 
If the sampled DES is not a CDES, i.e., 
if after applying the DES to $G$ the resulting graph would have a different CDM, 
the algorithm discards the sampled DES and continues to the next iteration. 
Otherwise, if the sampled DES is a CDES, it evaluates the acceptance probability for the MH approach. 
This is done in lines~\ref{algb:rhostart}-\ref{algb:rhoend}. 
These operations are analogous to the standard algorithm to sample from the CM (Algorithm~3 from~\cite{fosdick2018configuring}). 
The key parameter computed by the algorithm is the proposal distribution ratio $\rho = \xi_H(G)/\xi_G(H)$; 
since the pair of edges $(u,v) , (x,y)$ is sampled uniformly at random from the multiset $E$ (i.e., all pairs of edges are equally likely to be sampled), such ratio mainly depends on the multiplicities of the edges $(u,v) , (x,y) , (u,x) , (v,y)$ involved in the DES. 
These are defined by the function $w_E$, where $w_E(u,v)$ is the number of occurrences of the edge $(u,v)$ in the multiset $E$. 
For instance, when the DES involves $4$ distinct nodes, 
the parameter $\rho$ is $\frac{(\omega_E(u,x)+1)(\omega_E(v,y)+1)}{\omega_E(u,v) \omega_E(x,y)}$ (line~\ref{algb:rhostart}), i.e.,
counting the number of equivalent DES that involve copies of the sampled pair of edges.
Following the MH approach, the move from the current state $G$ to the new state $H$ is accepted with probability $\rho \pi(H)/\pi(G)$ (line~\ref{algline:updateg} and~\ref{algline:acceptancef}). 
if the move is accepted, $G$ is replaced by $H$. 
We note that, when the target distribution $\pi$ is uniform over the space, as it is typical in applications,  
the acceptance probability is simply $\rho$. 

The following result states that the stationary distribution of random samples returned by \algname-B is unique and is equal to $\pi$.

\begin{theorem}
\label{thm:correctbaseline}
The Markov chain run by \algname-B has stationary distribution $\pi$. 
\end{theorem}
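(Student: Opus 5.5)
The plan is to combine ergodicity, which is already established, with detailed balance for the chain that \algname-B actually runs. Theorem~\ref{thm:irreducible} and Theorem~\ref{thm:aperiodic} (or its lazy variant, which also covers rejected proposals) show that $\G$ is finite, irreducible and aperiodic, so it has a unique stationary distribution. Every transition of \algname-B is either a changing CDES, which moves along an edge of $\G$, or a stay at the current state: a discarded non-CDES, a non-changing CDES, or a rejected move. So the chain is supported on $\G$ plus self-loops, and self-loops only help aperiodicity. It then suffices to show $\pi(G)P(G,H)=\pi(H)P(H,G)$ for every pair $G\neq H$ of adjacent states. Together with uniqueness, this identifies $\pi$ as the stationary distribution.

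\textbf{Computing the proposal probabilities.} For adjacent $G\neq H$, the proposal probability $\xi_G(H)$ is the probability that one iteration samples an ordered pair of edge occurrences, and an orientation, that realizes the swap $G\to H$. Pairs are drawn uniformly over the $\binom{m}{2}$ pairs of distinct edge occurrences, and one of the two DES is chosen with probability $1/2$. Hence
\[
\xi_G(H)=\frac{N_{G\to H}}{2\binom{m}{2}},
\]
where $N_{G\to H}$ counts the (occurrence-pair, orientation) choices producing $H$. Because $m$ is the same in every state, the ratio reduces to $\xi_H(G)/\xi_G(H)=N_{H\to G}/N_{G\to H}$. Note also that the CDES filter depends only on $G$ and $H$: the move $H\to G$ is a CDES exactly when $G\to H$ is, since both graphs lie in $\V$. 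So the filter never breaks the symmetry.

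\textbf{Main step: counting multiplicities.} The bulk of the work is checking, case by case, that the $\rho$ computed in lines~\ref{algb:rhostart}--\ref{algb:rhoend} equals $N_{H\to G}/N_{G\to H}$. I would split into cases:
\begin{itemize}
\item four distinct endpoints, where the count is $w_E(u,v)\,w_E(x,y)$ forward and $(w_E(u,x)+1)(w_E(v,y)+1)$ backward;
\item swaps that create or destroy a self-loop, or involve two copies of the same edge, where factors of $2$ and binomial terms $\binom{w}{2}$ appear;
\item cases where $\{(u,x),(v,y)\}$ coincide as a multiset, for example when $e_1'=e_2'$.
\end{itemize}
These are the same cases as in Algorithm~3 of~\cite{fosdick2018configuring}. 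The restriction to CDES does not alter any of these counts, so I would invoke that analysis and verify that each case matches the corresponding line of Algorithm~\ref{alg:baseline}.

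\textbf{Conclusion.} With the acceptance probability $\min\{1,\rho\,\pi(H)/\pi(G)\}$, the standard MH identity gives
\[
\pi(G)\,\xi_G(H)\min\Bigl\{1,\tfrac{\pi(H)\xi_H(G)}{\pi(G)\xi_G(H)}\Bigr\}=\min\{\pi(G)\xi_G(H),\,\pi(H)\xi_H(G)\},
\]
which is symmetric in $G$ and $H$. This establishes detailed balance, so $\pi$ is stationary, and by ergodicity it is the unique stationary distribution. I expect the main obstacle to be the multiplicity bookkeeping in the degenerate cases (self-loops and repeated edges), because there the orientation choice and the ordering of the pair can double-count the same swap.
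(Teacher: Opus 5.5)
Your proposal is correct and follows essentially the paper's route: ergodicity from Theorems~\ref{thm:irreducible} and~\ref{thm:aperiodic}, combined with the multiplicity analysis of Algorithm~3 of~\cite{fosdick2018configuring}. That analysis applies because pairs of edges are sampled uniformly from $E$, every CDES is a DES, and the state space is $\V$. What you add is the explicit detailed-balance computation and the observation that the CDES filter is symmetric between $G$ and $H$, both of which the paper leaves implicit in its citation.
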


\textbf{\algname\ algorithm.}
We now introduce \algname, a refined approach to efficiently generate random multigraphs with prescribed CDM. 
The key insight in the design of \algname\ is a precise characterization of DES that can form changing CDES. 
We prove a \emph{necessary} condition for a pair of edges to realize a changing CDES. 
\begin{lemma}
\label{lemma:cdescond}
Any changing CDES involving the edges $(u,v) , (x,y)$
satisfies $\{ c(u) , c(v) \} = \{ c(x) , c(y) \}$.
\end{lemma}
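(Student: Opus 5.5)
The plan is to argue by direct bookkeeping of how a DES changes the colored neighbor multisets, node by node. Without loss of generality write the changing CDES as $\swap{(u,v),(x,y)}{(u,x),(v,y)}$; the other DES on the same pair is this one with $x$ and $y$ renamed, and the claimed conclusion is symmetric in $x,y$. For every node $w$, the DES removes from $N_G(w)$ the occurrences contributed by $(u,v)$ and $(x,y)$ and adds those contributed by $(u,x)$ and $(v,y)$. Self-loops count twice, following the paper's convention. All other edges are untouched. Since the CDM is preserved, for each $w$ the multiset of colors removed from $w$'s neighborhood must equal the multiset of colors added to it.

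\textbf{Generic case: $u,v,x,y$ pairwise distinct.} Here the accounting is immediate. Node $u$ loses a neighbor of color $c(v)$ and gains one of color $c(x)$, so preserving $\C_G[\cdot,u]$ forces $c(v)=c(x)$. Node $v$ loses color $c(u)$ and gains $c(y)$, so $c(u)=c(y)$. Hence $\{c(u),c(v)\}=\{c(y),c(x)\}$. The constraints from $x$ and $y$ are the same two equalities, so they add nothing.

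\textbf{Coincident endpoints.} The remaining work is when some of $u,v,x,y$ coincide. I would split by which identifications occur.

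\begin{itemize}
\item If $u=x$, node $u$ loses colors $c(v)$ and $c(y)$ via $(u,v),(u,y)$ and gains $c(u)$ twice via the new self-loop $(u,u)$. Equality of these multisets gives $c(v)=c(y)=c(u)$, so both color sets equal $\{c(u)\}$. The case $v=y$ is symmetric.
\item If $u=v$ (the old edge is a self-loop), node $u$ loses $c(u)$ twice and gains $c(x)$ and $c(y)$. So $c(x)=c(y)=c(u)$ and again both sets are $\{c(u)\}$. The case $x=y$ is symmetric.
\item If $u=y$ (or $v=x$) but no other coincidence, the swap leaves $u$'s neighborhood unchanged as a multiset of nodes ($(u,v),(x,u)$ become $(u,x),(v,u)$). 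Thus $u$ gives no constraint, but then the move is in fact non-changing: the edge multiset $\{\!\!\{(u,v),(x,u)\}\!\!\}$ equals $\{\!\!\{(u,x),(v,u)\}\!\!\}$. This contradicts the changing hypothesis.
\end{itemize}

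More generally, whenever the removed and added edge pairs coincide as multisets (for example $\{u,v\}=\{x,y\}$ with the matching that reproduces the same edges), the DES is non-changing and is excluded. In every other configuration, the per-node constraint at $u$ or $v$ forces the color identity as above.

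The main obstacle is making this case analysis exhaustive and clean, especially in combinations with several coincidences, such as $u=x$ together with $v=y$, or a self-loop combined with a shared endpoint. To keep it manageable, I would first observe that a constraint at a node that is an endpoint of exactly one removed and one added edge directly yields an equality of single colors. Only nodes touched by self-loops or by two edges need the two-element multiset comparison, and in all such cases the multiset equality forces every involved color to equal the node's own color. That makes $\{c(u),c(v)\}=\{c(x),c(y)\}$ automatic, unless the configuration is exactly the non-changing one.
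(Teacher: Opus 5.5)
Your proposal is correct and follows essentially the same route as the paper: an exhaustive case analysis over coincidences among $u,v,x,y$, where preserving the colored degrees at the endpoints forces $\{c(u),c(v)\}=\{c(x),c(y)\}$ and the degenerate configurations are discarded as non-changing. The differences are only organizational. The paper splits on $|\{u,v,x,y\}|$ and examines both swaps in each case, whereas you fix one swap using the $x\leftrightarrow y$ symmetry and note that it is non-changing whenever $u=y$ or $v=x$; this holds regardless of other coincidences, so the ``no other coincidence'' qualifier in your third bullet is unnecessary. That observation makes your remaining bullets exhaustive, and they also explicitly cover the two-self-loop configuration $(u,u),(x,x)$, which the paper's proof leaves implicit.
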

Lemma~\ref{lemma:cdescond} specifies that all pairs of edges $(u,v) , (x,y)$ such that $\{ c(u) , c(v) \} \neq \{ c(x) , c(y) \}$ surely do \emph{not} form changing CDES, i.e., operations to move on the chain $\G$. 
A pair of edges $(u,v) , (x,y)$ that satisfies $\{ c(u) , c(v) \} = \{ c(x) , c(y) \}$ 
falls under two possible cases:
\begin{itemize}
\item $c(u) = c(v)$: the sampled edges are monochromatic and have the same color; both DES are CDES;
\item $c(u) \neq c(v)$: the sampled edges are bichromatic and have the same colors; only one DES is a CDES. Assuming $c(u) = c(x)$, only the DES $\swap{(u,v),(x,y)}{(u,y),(x,v)}$ is a CDES.
\end{itemize}
By leveraging these observations, 
we design 
\algname\ to only sample pairs of edges with
$\{ c(u) , c(v) \} = \{ c(x) , c(y) \}$,
which are the only candidates to be (possibly changing) CDES, thus proposing moves on the state graph $\G$ tailored to preserve the CDM. 

\algname\ is described in Algorithm~\ref{alg:ccm}. 
First, it samples uniformly at random an edge $(u,v)$ from $E$ (line~\ref{algccm:sampleuv}); 
given $\ell = c(u)$, and $r = c(v)$, it samples 
uniformly at random an edge $(x,y)$ from $E_{\ell , r} \setminus \{ (u,v) \}$ (line~\ref{algccm:samplexy}),
the multiset of edges of colors $\ell , r$. 
We may note that $E_{\ell , r} \setminus \{ (u,v) \}$ is empty if $|E_{\ell , r}| = 1$; we avoid this case with the following simple optimization. 
\algname\ samples $(u,v) \in E$ in line~\ref{algccm:sampleuv} if and only if $|E_{\ell , r}| \geq 2$: from Lemma~\ref{lemma:cdescond}, edges that do not satisfy this condition do not form changing CDES, and are removed from $E$ before the $t$ iterations of the algorithm with a simple preprocessing, and re-inserted at the end of the iterations. 
For simplicity and w.l.o.g., we assume that for the pair of edges $(u,v)$,$(x,y)$ sampled from 
$E_{\ell , r}$ it holds $c(u) = c(x) = \ell$, 
and 
$c(v) = c(y) = r$. 
Then, if $\ell = r$, \algname\ samples uniformly at random one of the two possible DES, which are both CDES; 
otherwise, when $\ell \neq r$, it deterministically choose the only DES that is a CDES, which is 
$\swap{(u,v),(x,y)}{(u,y),(x,v)}$. 
To do so, \algname\ swaps the nodes $u,v$ in line~\ref{algccm:cdeschange}
before defining the CDES (line~\ref{algccm:cdesdef}). 
Then, \algname\ computes the proposal distribution ratio $\rho = \xi_H(G)/\xi_G(H)$ (lines~\ref{algccm:ratiostart}-\ref{algccm:ratioend}), 
which depends on the multiplicities $w_E(u,v)$, $w_E(x,y)$, $w_E(u,x)$, $w_E(v,y)$, of the edges involved in the CDES, 
analogously to \algname-B. 
The proposed new state $H$, obtained by applying the sampled CDES to $G$ (line~\ref{algccm:applycdes}), 
is accepted with probability $\rho \pi(G)/\pi(H)$ (line~\ref{algccm:accept}),
following the MH approach.

\begin{algorithm}[t]
  \caption{\algname}\label{alg:ccm}
  \DontPrintSemicolon%
  \SetKwFor{RepTimes}{repeat}{times}{end}
  \SetKwRepeat{Do}{do}{while}
  \SetKw{Continue}{continue}
  \SetKw{Or}{or}
  \SetKw{And}{and}
  \KwIn{Observed multigraph $G \doteq (V, E, \mathcal{L}, c )$,
    distribution $\pi$ over $\mathcal{G}$, number of iterations $t$}
  \KwOut{Multigraph drawn from $\mathcal{G}$ according to $\pi$}
  \RepTimes{$t$}{
  $(V, E, \mathcal{L}, c ) \gets G$\;
      $(u,v) \gets$ sample from $U(E)$\; \label{algccm:sampleuv}
      $\ell , r \gets c(u) , c(v)$\;
      $(x,y) \gets$ sample from $U(E_{\ell,r} \setminus \{( u,v )\})$\; \label{algccm:samplexy}
      $p \gets$ sample from $U([0,1])$\;
      \lIf{$\ell \neq r$ or $p < 1/2$}{
      $(u,v) \gets (v,u)$ \label{algccm:cdeschange}
      }
      $\mathsf{cdes} \gets \swap{(u,v),(x,y)}{(u,x),(v,y)}$\; \label{algccm:cdesdef}
    \lIf{$|\{ u,v,x,y \}| = 4$}{
    $\rho \gets \frac{(\omega_E(u,x)+1)(\omega_E(v,y)+1)}{\omega_E(u,v) \omega_E(x,y)}$ \label{algccm:ratiostart}
    }
    \If{$|\{ u,v,x,y \}| = 3$}{
    \lIf{$u=v$ or $x=y$}{
    $\rho \gets \frac{(\omega_E(u,x)+1)(\omega_E(v,y)+1)}{2 \omega_E(u,v) \omega_E(x,y)}$
    }
    \lElse{
    $\rho \gets \frac{2(\omega_E(u,x)+1)(\omega_E(v,y)+1)}{ \omega_E(u,v) \omega_E(x,y)}$
    }
    }
    \If{$|\{ u,v,x,y \}| = 2$}{
        \lIf{only $(u,v)$ or $(x,y)$ is a self-loop}{ 
        \textbf{continue}
        }
        \If{both $(u,v)$ and $(x,y)$ are self-loops}{ 
        $\rho \gets \frac{(\omega_E(u,x)+2)(\omega_E(u,x)+1)}{ 4 \omega_E(u,u) \omega_E(x,x)}$\;
        }
        \lElse{
        $\rho \gets \frac{4(\omega_E(u,u)+1)(\omega_E(v,v)+1)}{ \omega_E(u,v)( \omega_E(u,v)-1)}$
        }
    }
    \lIf{$|\{ u,v,x,y \}| = 1$}{\textbf{continue} \label{algccm:ratioend} }
    $H \gets $ apply $\mathsf{cdes}$ to $G$\;\label{algccm:applycdes}
    $p \gets $ sample from $U([0,1])$\;
    \lIf{$p < \frac{\rho \pi(H)}{\pi(G)}$}{%
      $G \gets H$\label{algccm:accept}
    }
  }
  \Return{$G$}\;
\end{algorithm}

The following result implies the correctness of \algname. 
\begin{theorem}
\label{thm:correctalgo}
The Markov chain run by \algname\ has stationary distribution $\pi$. 
\end{theorem}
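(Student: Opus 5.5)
We prove the statement by combining ergodicity of $\G$ with detailed balance for the transition kernel of \algname. Theorems~\ref{thm:irreducible} and~\ref{thm:aperiodic} make $\G$ irreducible and aperiodic; under the lazy modification the aperiodicity conditions are not even needed. So the chain has a unique stationary distribution, and it suffices to show $\pi(G)P(G,H)=\pi(H)P(H,G)$ for every pair of distinct states $G,H$ joined by a changing CDES. By the MH acceptance rule, this reduces to showing that the value $\rho$ computed in lines~\ref{algccm:ratiostart}--\ref{algccm:ratioend} equals the true proposal ratio $\xi_H(G)/\xi_G(H)$. Here $\xi_G(H)$ is the total probability, over all random choices in one iteration from state $G$, of proposing $H$.

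First I compute $\xi_G(H)$. Let the CDES be $\swap{(u,v),(x,y)}{(u,x),(v,y)}$ with $\{c(u),c(v)\}=\{c(x),c(y)\}=\{\ell,r\}$, which Lemma~\ref{lemma:cdescond} guarantees for every changing CDES. The probability of sampling a particular ordered pair of edge copies is $\frac{1}{m'}\cdot\frac{1}{|E_{\ell,r}|-1}$, where $m'$ is the number of edges left after preprocessing. Neither $m'$ nor $|E_{\ell,r}|$ depends on the state, since both are determined by the JCM, which is fixed by the CDM. This constant factor is therefore identical in $\xi_G(H)$ and $\xi_H(G)$ and cancels. The same holds for the factor $1/2$ in the monochromatic case $\ell=r$, and for the deterministic choice in the bichromatic case: the reverse move from $H$ uses the bichromatic pair $(u,x),(v,y)$, which again has exactly one CDES, namely the move back. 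What remains is to count the ordered choices of edge copies (and of orientation/DES) that realize $G\to H$. In the generic case $|\{u,v,x,y\}|=4$ this count is proportional to $w_E(u,v)\,w_E(x,y)$ from $G$ and to $(w_E(u,x)+1)(w_E(v,y)+1)$ from $H$, which gives the stated $\rho$. One point needs care: an ordered pair can be sampled as $(e_1,e_2)$ or as $(e_2,e_1)$, and the relabeling $(u,v)\gets(v,u)$ must map both orders to the same swap. I would verify that this symmetric factor of $2$ appears identically in both directions.

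Next I treat the degenerate cases. With three distinct nodes, either one sampled edge is a self-loop, or the two sampled edges share an endpoint. Each of these halves or doubles the number of realizing choices in one direction: a self-loop $(u,u)$ admits only one orientation, and a created self-loop collapses two orientations into one. This produces the factors $1/2$ and $2$ in the algorithm. With two distinct nodes, I check that two self-loops $(u,u),(x,x)$ become two copies of $(u,x)$, and conversely. Drawing two distinct copies of the multi-edge $(u,x)$ from $w_E(u,x)$ copies gives the falling factorial $w(w-1)$ in one direction. The orientation counts give the factor $4$, and the $+1,+2$ shifts account for the multiplicities after the swap. The skipped cases (exactly one self-loop with two nodes, and a single node) produce only non-changing swaps, so rejecting them only adds self-loops and does not affect detailed balance.

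The main obstacle is this case-by-case multiplicity bookkeeping, especially verifying the two-node cases and the orientation-symmetry argument. I would handle it by writing, for each case, the explicit set of ordered tuples (first edge copy, second edge copy, coin outcome) that map $G$ to $H$, and then matching its size against the corresponding set for $H$ to $G$. This mirrors the standard argument for Algorithm~3 in~\cite{fosdick2018configuring}, restricted to the color class $E_{\ell,r}$. Once the counts match $\rho$, the acceptance probability $\min\{1,\rho\,\pi(H)/\pi(G)\}$ gives detailed balance, so $\pi$ is stationary. By ergodicity, $\pi$ is the unique stationary distribution.
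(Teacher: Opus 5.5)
Your proposal is correct and follows essentially the same route as the paper. Both proofs verify that $\rho$ equals $\xi_H(G)/\xi_G(H)$, do the multiplicity bookkeeping of Algorithm~3 of \cite{fosdick2018configuring} within the color class, and conclude via ergodicity from Theorems~\ref{thm:irreducible} and~\ref{thm:aperiodic}. You are more explicit than the paper that the sampling constants cancel because $|E_{\ell,r}|$ is fixed across $\V$. The paper, in turn, saves degenerate-case work by noting that for bichromatic pairs every configuration with $|\{u,v,x,y\}|\le 3$ gives a non-changing CDES, so the self-loop and shared-endpoint counts are needed only in the monochromatic class.
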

The theorem follows from the facts that \algname\ always samples pairs of edges that form CDES, that $\rho$ computes the proposal distribution ratio correctly, that the chain accepts the proposed moves with the MH approach, and that $\G$ is ergodic. 

While we have shown that both \algname\ and \algname-B 
converge to the stationary distribution $\pi$, 
we prove that \algname\ has a significant advantage w.r.t. 
\algname-B in terms of the mixing time, i.e., the number of iterations required to converge to the target distribution~$\pi$. 
This follows from the design of \algname, that directly samples CDES, instead of wasting iterations on operations that are surely not CDES, or non-changing ones. 
We quantify such advantage by comparing the entries of the \emph{transition matrices} of the two methods. 
Let $P$ and $P^B$ be, respectively, the transition matrices of the Markov chains run by \algname\ 
and \algname-B. 
Recall that the entry $P_{i,j}$ is the probability that 
the Markov chain moves from state $i$ to state $j$, 
and that $\sum_{j} P_{i,j} = 1$. 
The following result shows the existing relationship between the diagonal entries 
$P^B_{i,i} , P_{i,i}$ (i.e., the probabilities that the chains remain in the state $i$)
and 
off-diagonal entries 
$P^B_{i,j} , P_{i,j} , i \neq j$, (i.e., the probabilities that the chains move from the state $i$ to a different state $j$) of such matrices. 

\begin{lemma}
\label{lemma:ccmcomp}
Define $\theta$ as
\begin{align*}
\theta = \sum_{\ell \in \L} \frac{(|E_{\ell , \ell}|-1) |E_{\ell , \ell}|}{(|E|-1)|E|}
+ \sum_{\ell , r \in \L : \ell < r} 
\frac{(|E_{\ell , r}|-1) |E_{\ell , r}|}{2(|E|-1)|E|} .
\end{align*}
For any pair of states $i,j$, it holds
$P^B_{i,i} = 1- \theta + P_{i,i} \theta $, and 
$P^B_{i,j} = P_{i,j} \theta $.
\end{lemma}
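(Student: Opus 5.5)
The plan is a coupling argument: I will show that one iteration of \algname-B, conditioned on its proposal being a CDES that \algname\ could also propose, has exactly the same law as one iteration of \algname. The entries of $P^B$ then split into ``\algname-like'' proposals, which happen with total probability $\theta$, and all other proposals, which never leave the state.

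First I will compute the probability that \algname-B proposes a given ordered configuration, meaning an ordered pair of edge occurrences $e_1,e_2$ together with a choice of DES. \algname-B picks a uniform pair of distinct occurrences, with probability $\frac{1}{(|E|-1)|E|}$ per ordered pair, and then one of the two DES with probability $1/2$. By Lemma~\ref{lemma:cdescond}, every changing CDES proposed by \algname-B has $\{c(u),c(v)\}=\{c(x),c(y)\}$. So every proposal with $\{c(u),c(v)\}\neq\{c(x),c(y)\}$ is either rejected outright (it is not a CDES) or is a non-changing CDES; in both cases the chain stays at $i$. The same holds for a bichromatic same-colour pair when the wrong DES is chosen.

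Next I will count the ``good'' proposals, which are exactly those of the kind \algname\ makes. These are: a monochromatic pair with colour $\ell$, with either DES, giving $\frac{(|E_{\ell,\ell}|-1)|E_{\ell,\ell}|}{(|E|-1)|E|}$ summed over $\ell$; and a bichromatic pair in $E_{\ell,r}$, with the unique CDES chosen with probability $1/2$, giving $\frac{(|E_{\ell,r}|-1)|E_{\ell,r}|}{2(|E|-1)|E|}$. Summing these gives exactly $\theta$.

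Conditioned on being good, the proposal is uniform over the configurations \algname\ samples. I will verify this against \algname's own sampling scheme. \algname\ draws $(u,v)$ uniformly from $E$ and then $(x,y)$ uniformly from $E_{\ell,r}\setminus\{(u,v)\}$, which weights each class $E_{\ell,r}$ in proportion to $|E_{\ell,r}|$ rather than $|E_{\ell,r}|(|E_{\ell,r}|-1)$. This is the main obstacle. The conditional law of \algname-B given ``good'' matches \algname's proposal only if the lemma implicitly uses the normalisation in which the pair is drawn uniformly among same-colour-class ordered pairs. Equivalently, $\xi_G(H)$ for \algname\ must equal $\xi^B_G(H)/\theta$ for every changing move.

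To settle this, I would write $\xi_G(H)$ for both algorithms explicitly as a sum over the multiplicity-counted edge pairs realising the move $G\to H$. I would then check that the ratio is the constant $\theta$, and that the acceptance ratios $\rho$ coincide; the second part is immediate, since both algorithms use the same formulas. If the class weights differ, I would restate the intended sampling as uniform over good ordered pairs, which is consistent with how $\theta$ is defined.

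Granting $\xi^B_i(j)=\theta\,\xi_i(j)$ for all $j\neq i$, with identical MH acceptance probabilities $\min\{1,\rho\pi(j)/\pi(i)\}$, I get $P^B_{i,j}=\theta P_{i,j}$ for every $j\neq i$. The diagonal identity then follows from the rows summing to one:
\begin{align*}
P^B_{i,i}=1-\sum_{j\neq i}\theta P_{i,j}=1-\theta(1-P_{i,i})=1-\theta+P_{i,i}\theta .
\end{align*}
The ``continue'' cases with $|\{u,v,x,y\}|\le 2$ count as staying put in both chains, so they contribute only to the diagonal and are handled consistently.
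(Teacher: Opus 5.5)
\textbf{The identity you grant is false for Algorithm~\ref{alg:ccm}.} Your skeleton is the paper's. You split on the event $Z$, of probability $\theta$, that \algname-B proposes an \algname-type configuration. You use Lemma~\ref{lemma:cdescond} to show every other proposal leaves the state unchanged, and you identify the chain on $Z$ with \algname. Your row-sum derivation of the diagonal is fine. The gap is the identity you grant, $\xi^B_i(j)=\theta\,\xi_i(j)$ for $j\neq i$, so the lemma cannot be proved as stated.

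A changing move $i\to j$ alters edges of only one class $E_{\ell,r}$. Under \algname-B, every (ordered pair, DES) configuration realizing it has probability $\frac{1}{2|E|(|E|-1)}$. Under \algname, with $E'$ the edge multiset after preprocessing, it has probability $\frac{1}{2|E'|(|E_{\ell,\ell}|-1)}$ if $\ell=r$ and $\frac{1}{|E'|(|E_{\ell,r}|-1)}$ if $\ell\neq r$. The acceptance probabilities coincide, so $P^B_{i,j}=\kappa_{\ell,r}P_{i,j}$. Here $\kappa_{\ell,\ell}=\frac{|E'|(|E_{\ell,\ell}|-1)}{|E|(|E|-1)}$ and $\kappa_{\ell,r}=\frac{|E'|(|E_{\ell,r}|-1)}{2|E|(|E|-1)}$ for $\ell\neq r$. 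The quantity $\theta$ is only the average of the $\kappa_{\ell,r}$, weighted by \algname's class-selection probabilities $|E_{\ell,r}|/|E'|$.

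\textbf{Counterexample.} Take red vertices $a,b,c,d$, blue vertices $e,f,g,h$, and edges $(a,b),(c,d),(e,f),(g,h),(a,e),(b,f)$. Then $\theta=1/6$. The red swap $(a,b),(c,d)\rightarrow(a,c),(b,d)$ has $P^B_{i,j}=1/30$ and $P_{i,j}=1/6$. The bichromatic swap $(a,e),(b,f)\rightarrow(a,f),(b,e)$ has $P^B_{i,k}=1/30$ and $P_{i,k}=1/3$. No single constant relates the two matrices. The paper's proof has the same hole: it asserts $\Pr(A^B_{i,j}\mid X^B(t)=i,Z)=P_{i,j}$ without computation, which here would say $1/5=1/6$.

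\textbf{Your repair needs a different normalization.} You were right to be suspicious, but granting the identity leaves your argument resting on a false step. Drawing uniformly among same-class ordered pairs still gives bichromatic moves twice the relative weight of monochromatic ones, because \algname-B reaches a bichromatic CDES only with an extra factor $1/2$. In the example the ratios stay $1/5$ and $1/10$.

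The argument goes through verbatim if \algname\ first selects the class $\{\ell,r\}$ with probability proportional to its summand of $\theta$. That means proportional to $|E_{\ell,\ell}|(|E_{\ell,\ell}|-1)$ or $|E_{\ell,r}|(|E_{\ell,r}|-1)/2$, followed by a uniform ordered pair in the class. This is exactly the law of \algname-B conditioned on $Z$.

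Alternatively, keep Algorithm~\ref{alg:ccm} and prove the class-dependent identity above. Every $\kappa_{\ell,r}\le 1$ and both chains are reversible with respect to $\pi$, so $P^B_{i,j}\le P_{i,j}$ for $i\neq j$ still holds. That is all the subsequent Peskun comparison needs.
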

The direct consequence of Lemma~\ref{lemma:ccmcomp}
is that all diagonal entries of the transition matrices satisfy
$P^B_{i,i} > P_{i,i}$, and for all off-diagonal entries 
it holds $P^B_{i,j} < P_{i,j}$ (since $\theta < 1$ when $|\L| \geq 2$). 
Consequently, 
\algname\ precedes \algname-B 
in Peskun’s order~\cite{peskun1973optimum}. 
This implies that \algname\  
has strictly smaller \emph{asymptotic variance} (Thm.~2 of~\cite{mira2001ordering}), 
since the eigenvalues of $P$ are smaller than $P^B$. 
A precedence in Peskun’s order is not sufficient to claim smaller mixing time, as it depends on the absolute spectral gap, i.e., the maximum \emph{absolute} eigenvalue of the transition matrix. 
However, a precedence in Peskun’s order implies smaller mixing time for transition matrices that are positive definite (i.e., all eigenvalues are positive, Thm.~6 of~\cite{mira2001ordering}). 
A very simple modification to \algname\ and \algname-B yields this condition.

Suppose that both \algname\ and \algname-B are made lazy, i.e., that at each iteration, they remain at the current state with probability $1/2$. 
It is well known that the transition matrix of a lazy chain is positive definite; it follows that \algname\ has strictly smaller mixing time than \algname-B, i.e., 
it converges faster to the stationary distribution $\pi$. 
We note that for running such a lazy variant of \algname\ (or \algname-B) it is not required to waste (approximately) half of the $t$ iterations; 
it is enough to run the chain for a random number of steps (a sample from the Binomial distribution with $2t$ trials and $1/2$ success probability, to perform the desired $t$ iterations on average). 

The key intuition behind Lemma~\ref{lemma:ccmcomp} is that 
$\theta$ represents the probability that a randomly sampled DES is either not a CDES, or a non-changing CDES, since it does not satisfy the conditions given in Lemma~\ref{lemma:cdescond}; 
therefore, it is expected that (at most) 
a fraction $\theta$ of iterations of \algname-B will propose valid moves on the state space. 
The value of $\theta$ depends on the number of colors and the relative size of the corresponding edge multisets, and is typically small in particular when the number of colors of $G$ grows. 
For instance, consider a graph with $|\L| = k$ colors with $| E_{\ell , r} | \approx | E_{\ell^\prime , r^\prime} |$ for all pairs of colors; 
in such a case, a simple calculation yields that $\theta = \BT{k^{-2}}$. 

\textbf{Running time analysis.} 
After a $\BO{ n + m }$ preprocessing, 
all operations performed by \algname\ and \algname-B at each iteration of the main loop require constant time (apart from querying and updating the dictionary of edge multiplicities $w_E$, taking average constant time). 
Therefore, the time and space complexities of the algorithms are, respectively, $\BO{ n + m + t }$ and $\BO{ n + m }$.

\section{Experimental evaluation}
\label{sec:experiments}
Our experimental evaluation has three goals: 
1)~we highlight the quantitative differences between random samples drawn from CM, \polaris, and \algname\ in terms of the obtained CDMs, i.e., the number of colored neighbors of each node on real-world networks and on their randomized versions; 
2)~we compare the performance of \algname\ with the baseline algorithm \algname-B, in terms of convergence and running time; 
3)~we apply \algname\ to evaluate the statistical significance of polarization measures on real-world vertex-colored graphs, and compare the results with state-of-the-art null models.

\textbf{Datasets.}
We consider 13 colored real-world networks from several domains, 
whose statistics are shown in Table~\ref{tab:graphs}. 
We briefly describe them.
Twitter~\cite{conover2011political}, 
Brexit, US-Elect, Abortion~\cite{garimella2017balancing}, 
Obamacare~\cite{garimella2017ebb}, 
Comb, and Guns~\cite{garimella2018political}  
are networks built from tweets, where vertices are users, the colors represent different opinions on controversial topics, and edges represent retweets between users. 
Com-Youtube and Com-DBLP~\cite{mislove2007socialnetworks,yang2012defining} 
are networks built from the Youtube social network, and from the DBLP Computer Science bibliography database. Edges represent, respectively, friendships between users and co-authorships between researchers;
the color of a node denotes if it belongs to any of the highest quality ground truth communities~\cite{yang2012defining}. 
Cite and Phy-Cit~\cite{preti2023maniacs} are citation networks,
where nodes represent publications and colors
denote Computer Science areas and the year of publication.
Trivago~\cite{chodrow2021generative} is a network 
where a node is an accommodation, and each edge connects accommodations visited by users in the same browsing session.
The color of a node denotes the accommodation's country. 
Walmart~\cite{amburg2020clustering} is a network where nodes are products, and edges connect products that were bought together.
The colors denote the products' departments.

\begin{table} 
  \caption{Statistics of the graphs considered in our experiments.  
  $n=|V|$ is the number of vertices, $m=|E|$ is the number of edges, $|\mathcal{L}|$ is the number of distinct nodes' colors, 
  $\hat{q} = \max_{\ell \in \L } | V_{ \ell , \ell } |/n$ is the maximum fraction of nodes of any color, 
  $\theta$ is the parameter defined in Lemma~\ref{lemma:ccmcomp}.
  }
\label{tab:graphs}
\center
  \begin{tabular}{lrrccc}
    \toprule
    $G$              & $n$      & $m$ & $|\mathcal{L}|$ & $\hat{q}$ & $\theta$   \\
    \midrule
    Cite & 3264 & 4611 & 6 & 0.21 & 0.118 \\
    Twitter & 22405 & 77920 & 3 & 0.51 & 0.365 \\
    Brexit & 22745 & 48830 & 2 & 0.50 & 0.503 \\
    US-Elect & 23832 & 845152 & 3 & 0.68 & 0.688 \\
    Phy-Cit & 30501 & 347268 & 6 & 0.13 & 0.014 \\
    Walmart & 88860 & 2267396 & 11 & 0.29 & 0.237 \\
    Trivago & 172738 & 1327092 & 160 & 0.09 & 0.049 \\
    Abortion & 279505 & 671144 & 2 & 0.50 & 0.524 \\
    Com-DBLP & 317080 & 1049866 & 2 & 0.71 & 0.321 \\
    Obamacare & 334617 & 1511670 & 2 & 0.50 & 0.598 \\
    Comb & 677753 & 6666018 & 2 & 0.50 & 0.550 \\
    Guns & 632659 & 7478993 & 2 & 0.50 & 0.423 \\
    Com-Youtube & 1134890  & 2987624 & 2 & 0.96 & 0.444 \\
  \bottomrule
\end{tabular}
\end{table}

\textbf{Experimental setup.} 
We implemented \algname\ and \algname-B in python. 
Our implementation, with data and scripts to reproduce the experiments, is available online.\footnote{https://github.com/leonardopellegrina/Sirius}
For CM and \polaris, we used the implementation from~\cite{preti2025polaris}. 
For all algorithms, we set the number of iterations $t=m \ln m$,
where $m$ is the number of edges of the graph, 
and sample from the uniform distribution over the respective graph spaces. 
We performed all experiments 
on a machine with 2.30 GHz Intel Xeon
CPU, 1 TB of RAM, on Ubuntu~22.04.

\textbf{Comparison with CM and \polaris.} 
In this first experiment we test for differences in terms of local assortativity of the nodes, i.e., measuring the fraction of same-color neighbors of each node. 
We compute these values from the observed, original networks, and compare them with their randomized counterpart obtained from CM and Polaris.
Doing so, we evaluate any difference in terms of the CDM between real-world networks and random samples from state-of-the-art null models.

\begin{figure}
\begin{subfigure}{.29\textwidth}
  \centering
  \includegraphics[width=\textwidth]{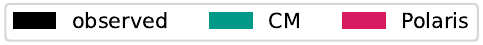}
\end{subfigure} \\
\begin{subfigure}{.35\textwidth}
  \centering
  \includegraphics[width=\textwidth]{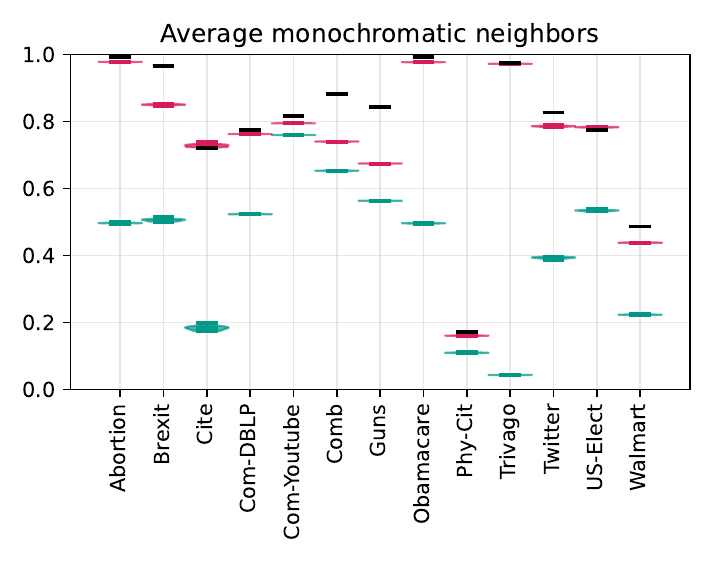} 
\end{subfigure}
\begin{subfigure}{.35\textwidth}
  \includegraphics[width=\textwidth]{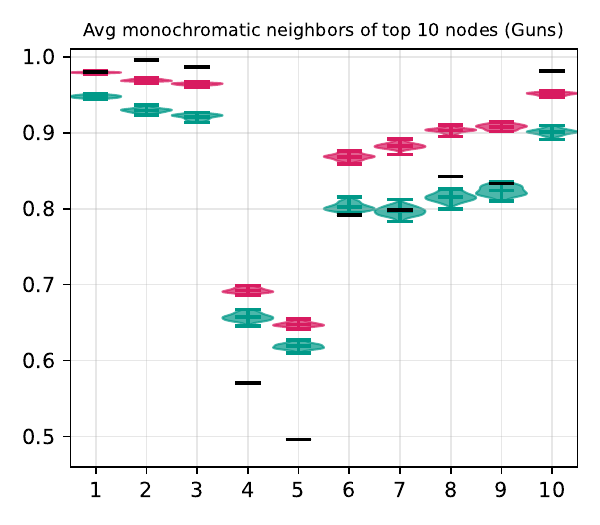}
\end{subfigure}
\caption{  $M$ and $M_v$ on observed vs random graphs. 
}
\Description{This figure shows two plots, reporting the values of $M$ and $M_v$ on observed vs random graphs.}
\label{fig:avgcolall}
\end{figure}

We define $M_v(G)$ as the fraction of neighbors of $v$ with the same color of $v$:
$M_v(G) = |N_G^{c(v)}(v)| / |N_G(v)|$. 
Then, we compute the average value of $M_v(G)$ over all nodes $v \in V$: $M(G) = \frac{1}{n} \sum_{v \in V} M_v(G)$. 
Intuitively, when $G$ represents a social network, 
$M_v$ quantifies the color assortativity of the node $v$,  
while $M$ its average over all nodes. 
We use $M$ as a proxy for comparing the CDM of two graphs; it is immediate to observe that for two graphs $G,H$ with the same CDM it holds
$M(G) = M(H)$, and $M_v(G) = M_v(H), \forall v \in V$. 
On the other hand, for two graphs $G,H$ with different CDMs it is reasonable to expect $M(G) \neq M(H)$. 

We compare, for each graph $G$ (Table~\ref{tab:graphs}), 
the value of $M(G)$ with $M(G^0_i)$, where $\{G^0_1 , G^0_2 , \dots , G^0_z\}$ are $z$ random graphs generated from the CM and Polaris null models. We use $z=100$ random samples. 
Note that we do not consider samples from the CCM as there would be no differences (since the CDM is preserved).

We show these results in Figure~\ref{fig:avgcolall}. 
We label with ``observed'' the value of $M(G)$ observed on real-world graphs, and with ``CM'' and ``Polaris'' the values from the null models. 
We note that the values of $M(G)$ are significantly different than the values $M(G^0_i)$ computed on random graphs from the CM. 
This may be expected, since CM completely ignores the colors of the nodes, and only preserves the degree sequence.
Interestingly, we also observe significant differences for random graphs drawn from the Polaris model, even if the quantitative gap is smaller in some cases. 
These findings confirm that only preserving the JCM, in addition to the degree sequence, yields random networks with significantly different CDMs, i.e., with a substantial change in the proportions of colored neighbors, even if averaged over all vertices of the graph. 

To inspect this observation in more detail, we also report the values of $M_v(G)$ for the $10$ nodes with highest degree of each graph.
Figure~\ref{fig:avgcolall} shows the results for the graph Guns 
(similar to other graphs, 
\ifextversion
shown in Fig.~\ref{fig:avgcolcmappendix}). 
\else
shown in the extended version~\cite{siriusextended}). 
\fi
Also in this case, we observe substantial differences between the values $M_v(G)$ observed on the original graphs, and the values $M_v(G^0_i)$ from the randomized samples; 
for both CM and Polaris, such values can be either higher or smaller than observed. 
For instance, in randomized versions of a social network, a given node may be overexposed, or underexposed, to like-minded individuals, compared to the observed network; 
depending on the goals and assumptions of the test, not controlling for this effect may introduce a bias in the statistical analysis. 
From these findings we conclude that random graphs from such models do not preserve the CDM, and thus output samples with substantial differences in terms of the colors of the neighbors of the nodes. 
In other words, preserving the \emph{global} color assortativity does not yield graphs with prescribed \emph{local} color assortativity. 
Therefore, existing models do not allow evaluating the effect of the CDM to downstream statistical analyses, such as the evaluation of the significance of a given polarization measure.

\textbf{Performance of \algname\ and \algname-B.}
In this section we test \algname\ and \algname-B in terms of convergence and running time. 

First, we compare \algname\ with \algname-B in terms of convergence. 
For both methods we use the same number of iterations $t = m \ln m$, where $m$ is the number of edges of the graph $G$.
Every $t/100$ iterations, we store the obtained graph and compute its degree assortativity, a commonly employed diagnostic measure of convergence~\cite{dutta2025sampling}. 
We repeat these tests $10$ times for each graph. 
Figure~\ref{fig:convergenceccm} shows the results for the graphs Phy-Cit and Trivago 
\ifextversion
(others in Fig.~\ref{fig:convergenceccmappx}). 
\else
(other graphs in the extended version~\cite{siriusextended}). 
\fi
The plots display the assortativity of the $10$ independent runs. 
We observe that \algname\ quickly reaches a plateau, an empirical sign of convergence to the target distribution $\pi$. 
Instead, \algname-B is much slower to converge, and does not reach the same plateau within the $t$ iterations. 
These findings show that \algname\ is drastically more effective in exploring the state space 
of multigraphs with fixed~CDM.

\begin{figure}[ht]
\begin{subfigure}{.18\textwidth}
  \centering
  \includegraphics[width=\textwidth]{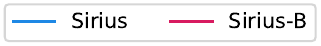}
\end{subfigure} \\
\begin{subfigure}{.235\textwidth}
  \includegraphics[width=\textwidth]{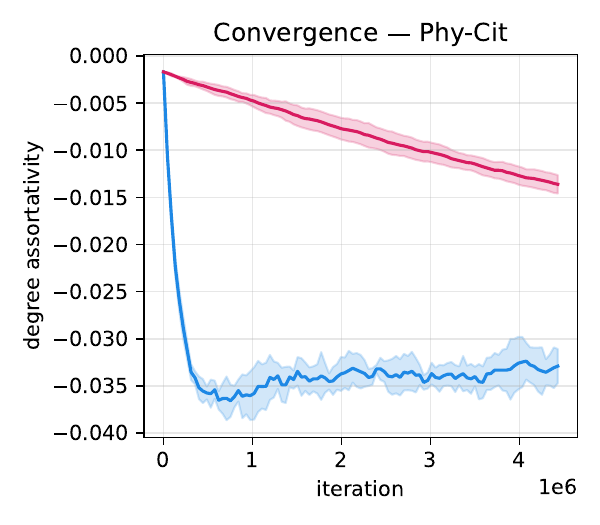}
\end{subfigure}
\begin{subfigure}{.235\textwidth}
  \centering
  \includegraphics[width=\textwidth]{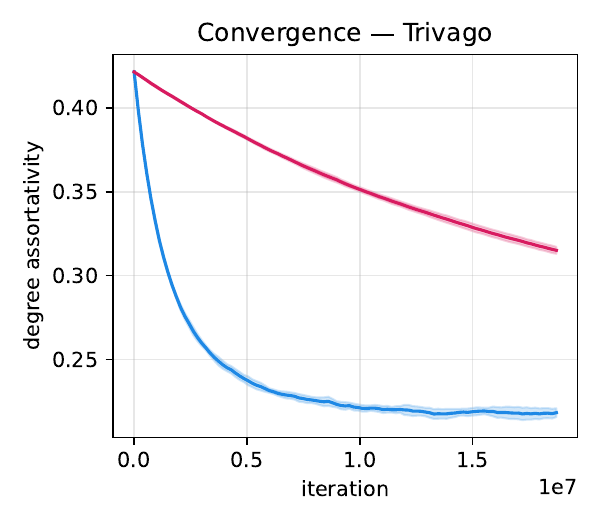} 
\end{subfigure}
\caption{  Degree assortativity of \algname\ and \algname-B.
}
\Description{This figure shows two plots, reporting the degree assortativity values of \algname\ and \algname-B on two graphs (Phy-Cit and Trivago).}
\label{fig:convergenceccm}
\end{figure}

\begin{figure}[ht]
\begin{subfigure}{.5\textwidth}
  \centering
  \includegraphics[width=.9\textwidth]{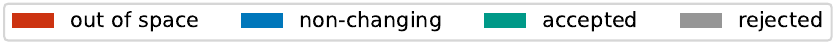}
\end{subfigure}  \\
\begin{subfigure}{.156\textwidth}
  \centering
  \includegraphics[width=\textwidth]{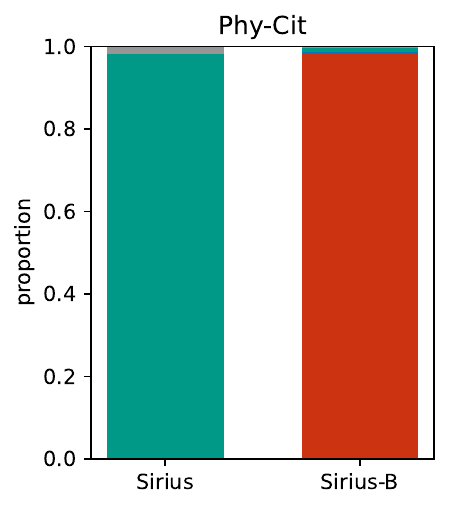}
\end{subfigure}
\begin{subfigure}{.156\textwidth}
  \includegraphics[width=\textwidth]{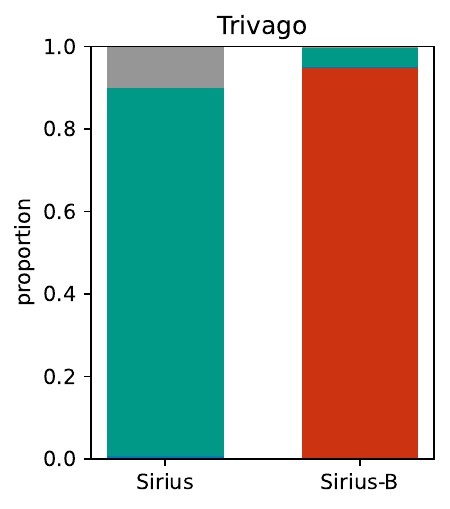}
\end{subfigure}
\begin{subfigure}{.156\textwidth}
  \centering
  \includegraphics[width=\textwidth]{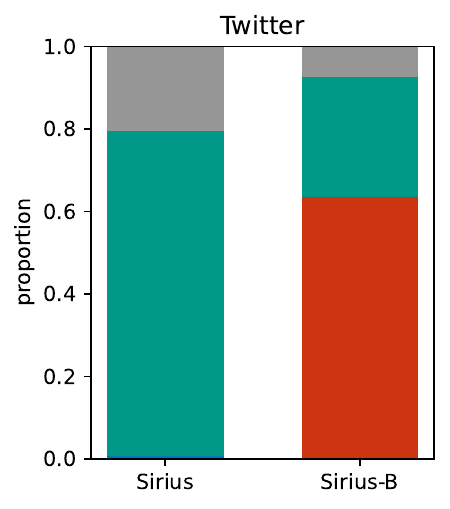} 
\end{subfigure}
\caption{  Iteration outcomes of \algname\ and \algname-B.
}
\Description{This figure shows three plots, reporting the fraction of iteration outcomes values of \algname\ and \algname-B on three graphs (Phy-Cit, Trivago, and Twitter).}
\label{fig:iterstatsmain}
\end{figure}

To understand the reasons for this gap, we report in Figure~\ref{fig:iterstatsmain} the statistics of the outcomes of the $t$ iterations performed by both methods on three graphs 
\ifextversion
(others in Fig.~\ref{fig:iterstatsappx}).
\else
(others in~\cite{siriusextended}).
\fi
The plots show the proportion of iterations resulting in ``out of space'' (when a sampled DES is not a CDES), 
``non-changing'' (when a sampled DES is non-changing CDES),
``accepted'' (when a changing CDES is accepted), 
and 
``rejected'' (when a changing CDES is rejected). 
(Note that ``non-changing'' values are extremely small and non visible in the plots.)
We observe a striking difference between the behavior of \algname\ and \algname-B: a large proportion of iterations of \algname-B is out of space, denoting that most of sampled DES are not CDES. 
Instead, as expected from its design, \algname\ only samples CDES. 
Moreover, a high fraction of proposed moves is accepted, 
a sign that it effectively moves on the chain $\G$. 
Interestingly, we observe that the fraction of iterations of \algname-B that results in out of space is extremely close to $\theta$ (reported in Table~\ref{tab:graphs} for each graph). 
This aligns with our theoretical analysis of the transition matrices $P$ and $P^B$ (Lemma~\ref{lemma:ccmcomp}). 
\ifextversion
In Figure~\ref{fig:convergenceccmappxmoreit} we also show 
\else
In~\cite{siriusextended} we also show 
\fi
the degree assortativity of \algname\ and \algname-B using $t = 10^2 m \ln m$ iterations; we observe that \algname\ reaches a stable plateau for all graphs, a robust sign of convergence. 

We now compare the methods in terms of running time. 
We run \algname\ with $t$ iterations, and run \algname-B until it samples $t$ CDES (i.e., until the total number of ``accepted'', ``rejected'', or ``non-changing'' iterations is $t$), to make sure that the random networks reported by both methods have an equal degree of convergence to $\pi$. 
\ifextversion
Figure~\ref{fig:runtimesall} shows the measured running time. 
\fi
From the plot 
\ifextversion
\else
(shown in the extended version~\cite{siriusextended}) 
\fi 
we observe that, even if \algname-B can rapidly check if a sampled DES is not a CDES, it is still significantly slower than \algname\ to sample and complete $t$ valid iterations, by at least a factor $\approx 2$, up to almost $2$ orders of magnitude. 
Finally, we found that the running time of \algname\ is comparable to CM and Polaris 
\ifextversion
(Figure~\ref{fig:runtimesall}). 
\else
(Figure in~\cite{siriusextended}). 
\fi

We conclude that \algname\ is much more efficient that the baseline method \algname-B, and is an effective tool to sample random networks with prescribed CDM.

\begin{figure}[ht]
\begin{subfigure}{.345\textwidth}
  \centering
  \includegraphics[width=\textwidth]{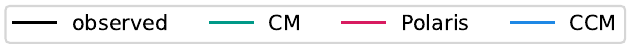}
\end{subfigure} \\
\begin{subfigure}{.345\textwidth}
  \centering
  \includegraphics[width=\textwidth]{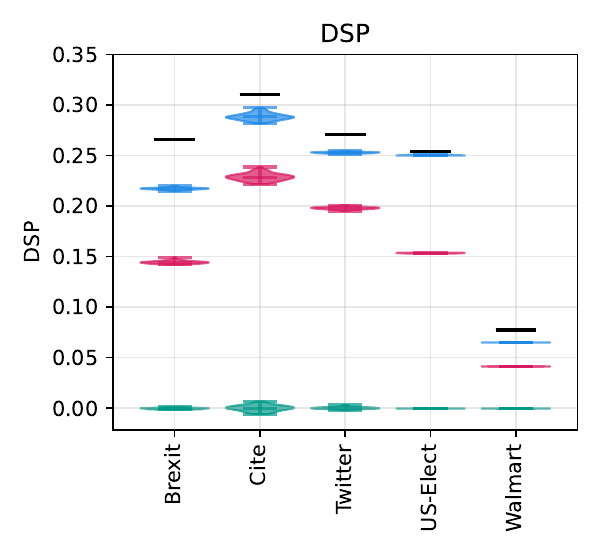} 
\end{subfigure}
\begin{subfigure}{.47\textwidth}
  \includegraphics[width=\textwidth]{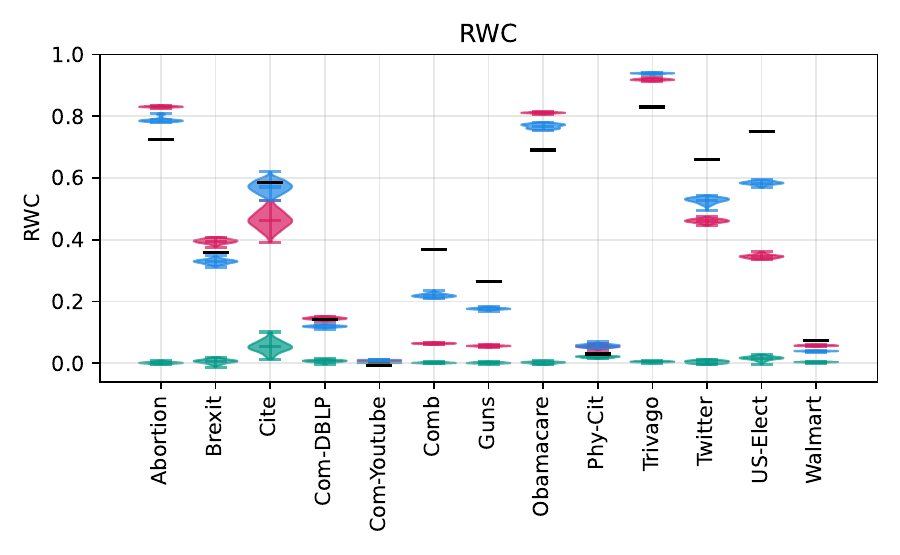}
\end{subfigure}
\caption{Polarization scores on observed vs random graphs. 
}
\Description{This figure shows two plots, reporting the polarization scores (DSP and RWC) on observed vs random graphs.}
\label{fig:polarization}
\end{figure}

\textbf{Evaluation of Polarization Measures.}
In this final experiment, we apply \algname\ to the task of assessing the statistical significance of polarization measures for colored multigraphs. 
While an in-depth investigation of structural polarization is beyond the scope or our work, 
our main goal is to assess the impact of preserving the CDM in a realistic scenario, i.e., to quantify the differences between random networks generated from null models with different characteristics in a practical network analysis task. 
For this experiment we increase the iterations to $t=10^2 m \ln m$ to avoid any artifact potentially due to convergence (however, we note that we always obtained the same results when using the default number of iterations). 

We focus on two measures: 
the popular Random Walk Controversy (RWC) score~\cite{garimella2018quantifying} 
and the recent DSP score~\cite{preti2026dsp}. 
The former quantifies the difference between the probability that a random walk (either unbounded or with restart) starts from a random user from one group, and reaches an influential user (e.g., a node with high degree) of the same group, and the probability that the random walk starts and ends on an influential node of a different group. 
The RWC score is close to one when the probability of crossing sides is low, and close
to zero when the probability of crossing sides is comparable to that of staying on the same side. 
Similarly to previous work~\cite{garimella2018quantifying,preti2026dsp}, we choose the set of $10$ highest-degree nodes of each community as the set of influential nodes. 
The DSP score uses a similar diffusion-based model, but 
with a refined distribution over random walks, taking into account unbalanced communities (i.e., groups of non-equal size); 
moreover, it does not rely on specifying a set of influential users of each community, but instead considers the probability that a random walk reaches any node (from the same or different community). 
DSP was shown to be less prone to bias or misinterpretation~\cite{preti2026dsp}. 
Its range is $(-1/2 , 1/2)$, taking the maximum value when 
diffusions from a random node of a community 
only reach
vertices of the same community, while the minimum is achieved when 
they only reach
vertices of the opposite community. 
Since the current implementations only support two communities, 
for graphs with more than~$2$ colors we define the first group using the color with the most vertices, and include in the second group all other colors. For both measures we used random walks with restart, with default restart probability ($0.15$). 
We only report the results for the DSP measure for the $5$ smallest networks, due to high running times (as computing DSP has $\BO{nm}$ time complexity). 
In Figure~\ref{fig:polarization} we show the scores from the observed graphs and from $100$ random networks for each null.

We first discuss the results for DSP. 
We see that this score is always close to $0$ for random graphs sampled from the CM. 
This suggests that for all considered graphs, the degree sequence alone does not explain the observed polarization. 
Random samples drawn with Polaris show higher values of the DSP score compared to CM, however, such quantity still is considerably higher in all the observed graph compared to the random samples. 
Therefore, the observed score is not explained by the JCM of the graphs.
Regarding samples from CCM, we see that in all cases the score is substantially higher compared to Polaris; 
on the other hand, the observed scores for almost all graphs are still significantly higher in the original networks.
This suggests that, in such cases, the observed level of polarization is not explained by the CDM alone, and that other factors have a key role. 
For one graph (US-Elect), the observed DSP score is almost the same of randomized graphs from the CCM; 
in this case, the CDM is a characteristic of the graph that most likely explains the observed structural polarization, i.e., affecting the probability that users from each community interact with, or is exposed to, content originating from nodes of the opposite community. 
Therefore, for the DSP score the insights that a network analyst may gather may substantially differ depending on the considered null model. 

We now discuss the results for the RWC score. 
Similarly to DSP, we observe that its value is in all cases close to $0$ under CM. 
Interestingly, the RWC may be either higher, or smaller, for Polaris compared to CCM. 
In general, the score under CCM is typically closer to the score obtained form the observed graph; 
this may be expected as CCM preserves additional structure and thus reasonably outputs graphs with characteristics more similar to real-world graphs. 
On graphs from social networks, we observe that RWC is significantly higher than CCM in several cases (Comb, Guns, Twitter, US-Elect), and significantly smaller in others (Abortion, Obamacare), suggesting that preserving the CDM does not explain the observed score in such cases. 
Other factors may decrease, or increase, the probability of reaching an influential user from the opposite community. 
In one graph (Cite) the observed RWC score is not explained by the observed JCM, while it is explained by the CDM, offering new insights on the network's structure. 

We conclude that \algname\ is an effective tool to investigate nuanced properties of colored multigraphs, 
such as the role of the observed CDM to a given phenomenon, which cannot be assessed using state-of-the-art graph null models.

\section{Conclusions}
In this work we introduce the Colored Configuration Model (CCM), 
an expressive null model preserving the Colored Degree Matrix (CDM).
Motivated by the study of polarization on social networks, the CDM quantifies the color assortativity of all nodes, encoding rich structure of the observed network. 
We develop \algname, an efficient algorithm  
to sample random graphs from the CCM, 
representing a new tool for future investigation of complex phenomena with vertex-colored graphs, while evaluating the effect of the local color assortativity observed in real-world networks.

\begin{acks}
This work is supported by 
the STARS@UNIPD 2025 program, 
project "AtHeNA: Algorithms for Heterogeneous Network Analysis", 
with the support of the University of Padova and Fondazione Cassa di Risparmio di Padova e Rovigo,
and by 
the Italian Ministry of University and Research (MUR), project “National Center for HPC, Big Data, and Quantum Computing” CN00000013. 
\end{acks}

\clearpage
\newpage

\bibliographystyle{ACM-Reference-Format}
\balance
\bibliography{bibliography}

\appendix

\section{Appendix}
\ifextversion
This appendix presents the proofs of our theoretical derivations, and experimental results that could not fit in the main manuscript due to space constraints. 
\else
This appendix presents the proofs of our theoretical derivations
that could not fit in the main manuscript due to space constraints. 
\fi

\begin{algorithm}
  \small
  \caption{\algname-B}\label{alg:baseline}
  \DontPrintSemicolon%
  \SetKwFor{RepTimes}{repeat}{times}{end}
  \SetKwRepeat{Do}{do}{while}
  \SetKw{Continue}{continue}
  \SetKw{Or}{or}
  \SetKw{And}{and}
  \KwIn{Observed multigraph $G \doteq (V, E, \mathcal{L}, c )$,
    distribution $\pi$ over $\mathcal{G}$, number of iterations $t$}
  \KwOut{Multigraph drawn from $\mathcal{G}$ according to $\pi$}
  \RepTimes{$t$}{
  $(V, E, \mathcal{L}, c ) \gets G$\;
    \Do{$\mathsf{des}$ is not a CDES}{
      $(u,v) \gets$ sample from $U(E)$\; \label{algb:sampleone}
      $(x,y) \gets$ sample from $U(E \setminus \{( u,v )\})$\; \label{algb:sampletwo}
      $p \gets$ sample from $U([0,1])$\; \label{algb:samplepdes}
      \lIf{$p < 1/2$}{
      $(u,v) \gets (v,u)$ \label{algb:desswap}
      }
      $\mathsf{des} \gets \swap{(u,v),(x,y)}{(u,x),(v,y)}$\; \label{algb:defdef}
    }
    \lIf{$|\{ u,v,x,y \}| = 4$}{
    $\rho \gets \frac{(\omega_E(u,x)+1)(\omega_E(v,y)+1)}{\omega_E(u,v) \omega_E(x,y)}$
    } \label{algb:rhostart}
    \If{$|\{ u,v,x,y \}| = 3$}{
    \lIf{$u=v$ or $x=y$}{
    $\rho \gets \frac{(\omega_E(u,x)+1)(\omega_E(v,y)+1)}{2 \omega_E(u,v) \omega_E(x,y)}$
    }
    \lElse{
    $\rho \gets \frac{2(\omega_E(u,x)+1)(\omega_E(v,y)+1)}{ \omega_E(u,v) \omega_E(x,y)}$
    }
    }
    \If{$|\{ u,v,x,y \}| = 2$}{
        \lIf{only $(u,v)$ or $(x,y)$ is a self-loop}{ 
        \textbf{continue}
        }
        \If{both $(u,v)$ and $(x,y)$ are self-loops}{ 
        $\rho \gets \frac{(\omega_E(u,x)+2)(\omega_E(u,x)+1)}{ 4 \omega_E(u,u) \omega_E(x,x)}$\;
        }
        \lElse{
        $\rho \gets \frac{4(\omega_E(u,u)+1)(\omega_E(v,v)+1)}{ \omega_E(u,v)( \omega_E(u,v)-1)}$
        }
    }
    \lIf{$|\{ u,v,x,y \}| = 1$}{\textbf{continue}} \label{algb:rhoend}
    $H \gets $ apply $\mathsf{des}$ to $G$\;\label{algline:updateg}
    $r \gets $ sample from $U([0,1])$\;
    \lIf{$r < \frac{\rho \pi(H)}{\pi(G)}$}{%
      $G \gets H$\label{algline:acceptancef}
    }
  }
  \Return{$G$}\;
\end{algorithm}

\begin{proof}[Proof of Theorem \ref{thm:irreducible}]
Let $H_1 = ( V , E_1 , \L , c )$ and $H_2 = ( V , E_2 , \L , c )$ 
be an arbitrary pair of states of $\G$. 
We show that it is possible to transform $H_1$ into $H_2$ with a finite sequence of CDES.
For $j \in \{ 1 , 2 \}$ and any pair of colors $\ell , r$, 
define $E_{j , \ell , r}$ as the multiset of edges that are adjacent to nodes with colors $\ell , r$, that is
$E_{j , \ell , r} = \{\!\!\{ (u , v) \in E_j : \{ c(u) , c(v) \} = \{ \ell , r \} \}\!\!\}$. 
Then, 
define $V_{\ell , r} = \{ v : c(v) = \ell \vee c(v) = r \}$ as the set of nodes that have color $\ell$ or $r$. 
Note that $E_{j , \ell , r}$ is a partition of the multiset $E_j$, 
and that $\bigcup_{ \ell , r } V_{\ell , r} = V$. 
Then, let $H_{ j , \ell , r }$ be the subgraph of $H_j$ defined as $H_{ j , \ell , r } = (V_{\ell , r} , E_{j , \ell , r} , \L , c )$. 
We note that, for any $\ell \in \L$,
$H_{ j , \ell , \ell }$ is the multigraph formed by all nodes of color $\ell$,
and all monochromatic edges of color $\ell$;
for any $\ell \neq r$, $H_{ j , \ell , r }$ is the bipartite multigraph formed by all nodes of color $\ell$ or $r$,
and all bichromatic edges of colors $\ell , r$. 
It is immediate to observe that $H_1 = H_2$ if and only if 
$H_{ 1 , \ell , r } = H_{ 2 , \ell , r }$, for all $\ell , r \in \L$.
Moreover, from Lemma~\ref{lemma:cdescond}, any CDES on $H_j$ only affects a pair of edges 
from exactly one subgraph $H_{ j , \ell , r }$, for some $\ell , r$. 

It is well known 
that the state graph of multigraphs with nodes that have the same color is strongly connected (e.g., Lemma~2.14 in~\cite{fosdick2018configuring}); 
therefore, there always exists a finite sequence of CDES 
connecting $H_{ 1 , \ell , \ell }$ to $H_{ 2 , \ell , \ell }$, for any $\ell \in \L$. 
Given an arbitrary ordering of the colors of $\L$,
by repeatedly applying such sequences of CDES we modify 
$H_{ 1 , \ell , \ell }$ to be equal to $H_{ 2 , \ell , \ell }$, for each $\ell \in \L$. 

The same holds for the state graph of bipartite multigraphs; 
a DES that preserves the bipartition and the degree sequence of the nodes is a CDES. Therefore, 
there always exists a finite sequence of CDES 
connecting $H_{ 1 , \ell , r }$ to $H_{ 2 , \ell , r }$, for any $\ell,r \in \L$ with $\ell \neq r$. 
It is sufficient to repeatedly apply such sequences of CDES to transform 
$H_{ 1 , \ell , r }$ to be equal to $H_{ 2 , \ell , r }$, for each $\ell , r \in \L , \ell \neq r$. 

After applying these sequences of CDES to $H_1$, the resulting multigraph is equal to $H_2$. Since all CDES are reversible, the state graph $\G$ is strongly connected.
\end{proof}

\begin{proof}[Proof of Theorem \ref{thm:aperiodic}]
Suppose the first condition holds. 
Then, for all states $H \in \mathcal{V}$, there exist two monochromatic edges with color $\ell$. 
Denote such edges $(u,v) , (x,y)$ in $H$. 
Depending on the cardinality of $\{ u , v , x , y \}$, there are four possible cases:

$|\{ u , v , x , y \}| = 1$: then both edges are self-loops; both CDES involving $(u,v) , (x,y)$ form self-loops for $H$, thus $\G$ is aperiodic. 

$|\{ u , v , x , y \}| = 2$: if only one of $(u,v)$ or $(x,y)$ is a self-loop, then both CDES involving $(u,v) , (x,y)$ form self-loops for $H$.
If both $(u,v)$ and $(x,y)$ are self-loops, there is a cycle of length $3$ starting from $H$: 
$\swap{(u,v),(x,y)}{(u,y),(x,v)}$, $\swap{(u,y),(x,v)}{(u,x),(v,y)}$, $\swap{(u,x),(v,y)}{(u,v),(x,y)}$; 
moreover, 
since all CDES are reversible, there is also a cycle of length $2$. 
If both $(u,v)$ and $(x,y)$ are not self-loops, than 
one CDES is a self-loop for $H$.
Therefore, $\G$ is aperiodic. 

$|\{ u , v , x , y \}| = 3$: 
if both edges are not self-loops, then one CDES is a self-loop for $H$; 
else, if one of the two edges is a self-loop (assume it is $(x,y)$), then
there exists a cycle of length $3$ starting from $H$: 
$\swap{(u,v),(x,y)}{(u,x),(v,y)}$, $\swap{(u,x),(v,y)}{(v,x),(u,y)}$, $\swap{(v,x),(u,y)}{(u,v),(x,y)}$. 
Since all CDES are reversible, there is also a cycle of length $2$. 
Therefore, $\G$ is aperiodic. 

$|\{ u , v , x , y \}| = 4$: 
then, 
there exists a cycle of length $3$ starting from $H$: 
$\swap{(u,v),(x,y)}{(u,x),(v,y)}$, $\swap{(u,x),(v,y)}{(v,x),(u,y)}$, $\swap{(v,x),(u,y)}{(u,v),(x,y)}$. 
As before, 
since all CDES are reversible, there is also a cycle of length $2$. 
Therefore, $\G$ is aperiodic. 

We now suppose that the second condition holds. 
Such condition implies that in $G$ there must exist a pair of colors $\ell \neq r$ such that there exist two bichromatic edges with colors $\ell,r$ that are incident to a node $v$ with $c(v) = r$.
Note that, if such condition holds for $G$, then it also holds for all states $H \in \mathcal{V}$. 
For any $H \in \mathcal{V}$, denote such edges $(v,x)$ and $(v , y)$, assuming w.l.o.g. that $c(v) = r$ and $c(x) = c(y) = \ell$. 
We observe that the only CDES 
$\swap{(v,x),(v,y)}{(v,y),(v,x)}$ is non-changing, thus is a self-loop for $H$;
therefore, $\G$ is aperiodic. 
\end{proof}

\begin{proof}[Proof of Theorem \ref{thm:correctbaseline}]
The proof follows from the irreducibility and aperiodicity properties of the Markov chain $\G$ explored by \algname-B, proved by Theorems~\ref{thm:irreducible} and~\ref{thm:aperiodic}, 
and by 
the analysis of Algorithm~3 of~\cite{fosdick2018configuring}: 
all the sampled pairs of distinct edges are chosen uniformly at random from $E$,   
all CDES are DES, 
and 
the graph space given in input to Algorithm~3~\cite{fosdick2018configuring} is the set $\V$ of all multigraphs with the same CDM of $G$. 
\end{proof}

\begin{proof}[Proof of Lemma \ref{lemma:cdescond}]
We prove the result by considering all changing DES involving $(u,v) , (x,y)$, showing that in all cases, if they are CDES then it must hold $\{ c(u) , c(v) \} = \{ c(x) , c(y) \}$. 
There are $4$ possible cases, depending on the cardinality of the set $\{ u , v , x , y \}$:

$|\{ u , v , x , y \}| = 4$: in such a case, both DES are changing. 
Consider the DES $\swap{(u,v) , (x,y)}{(u,y),(x,v)}$; it is a CDES if and only if $c(v) = c(y)$, and $c(u)=c(x)$; this holds when $\{ c(u) , c(v) \} = \{ c(x) , c(y) \}$. 
Consider the DES $\swap{(u,v) , (x,y)}{(u,x),(y,v)}$; it is a CDES if and only if $c(v) = c(x)$, and $c(u)=c(y)$; this holds when $\{ c(u) , c(v) \} = \{ c(x) , c(y) \}$. 

$|\{ u , v , x , y \}| = 3$: either one edge is a self-loop, or both are not self-loops. W.l.o.g., in the first case assume $x=y$ (thus $c(x) = c(y)$).
Both DES are changing and are equivalent, and are CDES if and only if $c(x)=c(y)=c(u)=c(v)$, that holds when $\{ c(u) , c(v) \} = \{ c(x) , c(y) \}$. 
Consider the case where both edges are not self-loops; assume w.l.o.g. $u=x$ (thus $c(u)=c(x)$). Only the DES $\swap{(u,v) , (x,y)}{(u,x),(y,v)}$ is changing, and is a CDES if and only if $c(x)=c(y)=c(u)=c(v)$, that holds when $\{ c(u) , c(v) \} = \{ c(x) , c(y) \}$. 

$|\{ u , v , x , y \}| = 2$: the two edges are either both self-loops, or none is a self-loop; otherwise, there is no changing DES. 
Consider the case where both edges are not self-loops. 
Then, w.l.o.g. assume $u=x$ and $v=y$. Only the DES
$\swap{(u,v) , (x,y)}{(u,x),(y,v)}$
is changing; it is a CDES if and only if 
$c(x)=c(y)=c(u)=c(v)$, that holds when $\{ c(u) , c(v) \} = \{ c(x) , c(y) \}$. 

$|\{ u , v , x , y \}| = 1$: no DES is changing. 

Therefore, we have proved that all changing CDES involving the pair of edges $(u,v) , (x,y)$ satisfy the condition $\{ c(u) , c(v) \} = \{ c(x) , c(y) \}$, obtaining the statement. 
\end{proof}

\begin{proof}[Proof of Lemma \ref{lemma:ccmcomp}]
Let $X(t)$ and $X^B(t)$ be, respectively, random variables that denote the state assumed by the Markov chains run by \algname\ and \algname-B at a given iteration $t$. 
For any $i,j$, define the events $A_{i,j}$ and $A^B_{i,j}$, respectively, 
that the Markov chains run by \algname\ and \algname-B moves from the state $i$ to $j$. 
Note that $j=i$ corresponds to staying in the state $i$. 
By definition of such events, it holds
$P_{i,j} = \Pr( A_{i,j} | X(t)=i )$ and $P^B_{i,j} = \Pr( A^B_{i,j} | X^B(t)=i )$.
At any iteration $t$ of \algname-B, let $e_1 = (u,v)$ and $e_2 = (x,y)$ be the random variables modeling the two distinct edges sampled by the algorithm (in lines~\ref{algb:sampleone}-\ref{algb:sampletwo}), and $p$ be the random variable drawn from $U([0,1])$ in line~\ref{algb:samplepdes}. 
Define the events $C_{\ell , r}$ and $Z$ as 
\begin{align*}
C_{\ell , r} &= \text{``} \{ c(u) , c(v) \} = \{ c(x) , c(y) \} = \{ \ell , r \} \text{''} \\ 
Z &= \text{``} C_{\ell , r} \wedge \left( (  \ell = r ) 
 \vee (  \ell \neq r \wedge p < 1/2  ) \right) \text{''} .
\end{align*}
Note that $C_{\ell , r}$ is the event that the pair of sampled edges has the same set $\{ \ell , r \}$ of colors (both monochromatic when $\ell = r$, both bichromatic when $\ell \neq r$). 
The event $Z$ is true when $C_{\ell , r}$ holds and $\ell = r$, or when 
$C_{\ell , r}$ holds, $\ell \neq r$, and $p<1/2$. 
The event $Z$ models the fact that the sampled pair of edges by \algname-B has the same properties of a pair of edges sampled by \algname, i.e., that
happens when $C_{\ell , r}$ holds and, when the edges are bichromatic, the right CDES is sampled by \algname-B (using $p$). 
We observe that the probability that $Z$ holds, over the values of $e_1,e_2,p$, is $\Pr(Z | X^B(t)=i ) = \Pr(Z) = \theta$. 
From the law of total probability it holds
\begin{align*}
P^B_{i,i} = \Pr( A^B_{i,i} | X^B(t)=i , Z ) \Pr(Z) + \Pr( A^B_{i,i} | X^B(t)=i , \bar{Z} ) \Pr(\bar{Z}).
\end{align*}
Note that, when $Z$ is false, from Lemma~\ref{lemma:cdescond} we obtain that the DES sampled by \algname-B is either not a CDES, or is a non-changing CDES. 
Therefore, it holds $\Pr( A^B_{i,i} | X^B(t)=i , \bar{Z} ) = 1$. 
Otherwise, when $Z$ holds, the CDES sampled by \algname-B has the same probability to be rejected, or to be a non-changing CDES, than \algname, i.e., it holds
$\Pr( A^B_{i,i} | X^B(t)=i , Z ) = \Pr( A_{i,i} | X(t)=i ) = P_{i,i}$. Therefore, 
\begin{align*}
P^B_{i,i} 
&= \Pr( A^B_{i,i} | X^B(t)=i , Z ) \Pr(Z) + \Pr( A^B_{i,i} | X^B(t)=i , \bar{Z} ) \Pr(\bar{Z}) \\
&= P_{i,i} \theta + 1-\theta .
\end{align*}
Similarly, for $i \neq j$, we have 
\begin{align*}
P^B_{i,j} = \Pr( A^B_{i,j} | X^B(t)=i , Z ) \Pr(Z) + \Pr( A^B_{i,j} | X^B(t)=i , \bar{Z} ) \Pr(\bar{Z}) ,
\end{align*}
where $\Pr( A^B_{i,j} | X^B(t)=i , \bar{Z} ) = 0$, 
and $\Pr( A^B_{i,j} | X^B(t)=i , Z ) = \Pr( A_{i,j} | X(t)=i ) = P_{i,j}$. 
Therefore, 
\begin{align*}
P^B_{i,j} &= \Pr( A^B_{i,j} | X^B(t)=i , Z ) \Pr(Z) + \Pr( A^B_{i,j} | X^B(t)=i , \bar{Z} ) \Pr(\bar{Z})  \\
&= P_{i,j} \theta ,
\end{align*}
obtaining the statement.
\end{proof}

\begin{proof}[Proof of Theorem \ref{thm:correctalgo}]
First, all pairs of sampled edges $(u,v)$, $(x,y)$ by the algorithm satisfy $\{ c(u) , c(v) \} = \{ c(x) , c(y) \}$; from Lemma~\ref{lemma:cdescond}, these edges form a CDES. 
To show that the algorithm follows the MH approach, we prove that the variable $\rho$ is set correctly to proposal distribution ratio $\xi_H(G)/\xi_G(H)$. 
It is immediate to observe that each pair of edges $(u,v)$, $(x,y)$ with colors 
$\{ c(u) , c(v) \} = \{ c(x) , c(y) \} = \{ \ell , r \}$ is sampled by \algname\ with probability $(|E| (|E_{\ell , r}|))^{-1}$. 
All CDES are reversible, and the edges $(u,x)$, $(v,y)$ obtained after applying the CDES to $G$ will also belong to $E_{\ell , r}$, thus are also sampled with probability $(|E| (|E_{\ell , r}|))^{-1}$. 
This implies that $\rho = \xi_H(G)/\xi_G(H)$ only depends on the multiplicities of the edges involved in the sampled CDES from $G$ to~$H$.  
Define $V_{\ell , r} = \{ v : c(v) = \ell \vee c(v) = r \}$ as the set of nodes that have color $\ell$ or $r$. 
We consider two cases:

$c(u) = c(v) = \ell$: in this case, both edges are monochromatic, and have the same color $\ell$. 
The sampled CDES is equivalent to a DES performed on the subgraph $G_{\ell,\ell} = (V_{\ell , \ell} , E_{\ell , \ell} , \L , c)$, i.e., a multigraph where all nodes have the same color. The variable $\rho$ is then set correctly, following the analysis of Algorithm~3 of~\cite{fosdick2018configuring}.

$c(u) = c(y) = \ell$, $c(v) = c(x) = r$, $\ell \neq r$: in this case, both edges are bichromatic, with colors $\ell , r$. 
When $|\{ u , v , x , y \}| \leq 3$, the sampled CDES is non-changing; regardless of the value of $\rho$, the proposed state $H = G$, thus the iteration perfomed by \algname\ is correct. 
When $|\{ u , v , x , y \}| = 4$, the CDES is changing; the variable $\rho$ is set to the correct value 
$\rho = \frac{(\omega_E(u,x)+1)(\omega_E(v,y)+1)}{\omega_E(u,v) \omega_E(x,y)}$ (line \ref{algccm:ratiostart}), 
since the number of equivalent CDES from $G$ to $H$ is $\omega_E(u,v) \omega_E(x,y)$, and the number of CDES from $H = (V , E^\prime , \L , c)$ to $G$ is 
$\omega_{E^\prime}(u,x) \omega_{E^\prime}(v,y) = (\omega_E(u,x)+1)(\omega_E(v,y)+1)$.

The statement follows from the irreducibility and aperiodicity properties of the Markov chain $\G$ explored by \algname, proved by Theorems~\ref{thm:irreducible} and~\ref{thm:aperiodic}; 
those imply that $\G$ is ergodic, and the use of the MH transition steps guarantees convergence to the stationary distribution $\pi$.
\end{proof}

\ifextversion
\begin{figure*}[ht]
\begin{subfigure}{.3\textwidth}
  \centering
  \includegraphics[width=\textwidth]{./figures/label-avg-neigh.pdf}
\end{subfigure} \\
\begin{subfigure}{.3\textwidth}
  \centering
  \includegraphics[width=\textwidth]{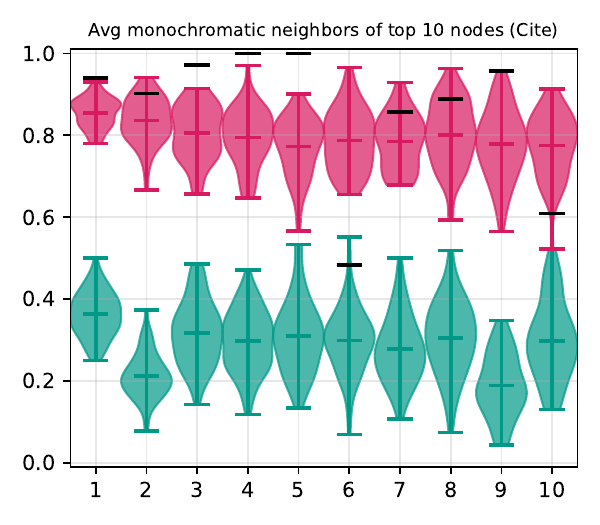} 
\end{subfigure}
\begin{subfigure}{.3\textwidth}
  \centering
  \includegraphics[width=\textwidth]{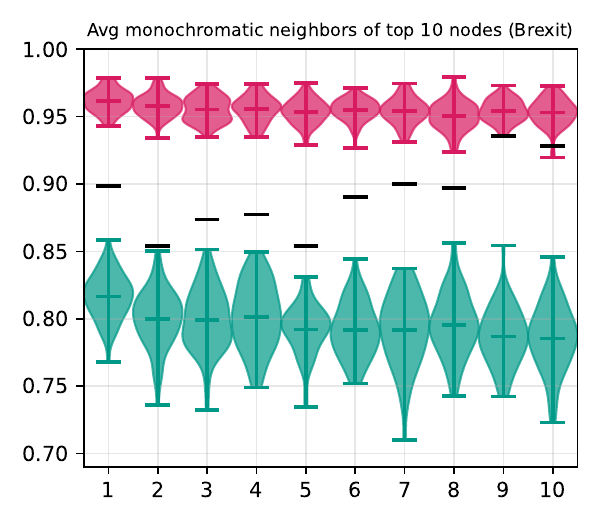} 
\end{subfigure}
\begin{subfigure}{.3\textwidth}
  \centering
  \includegraphics[width=\textwidth]{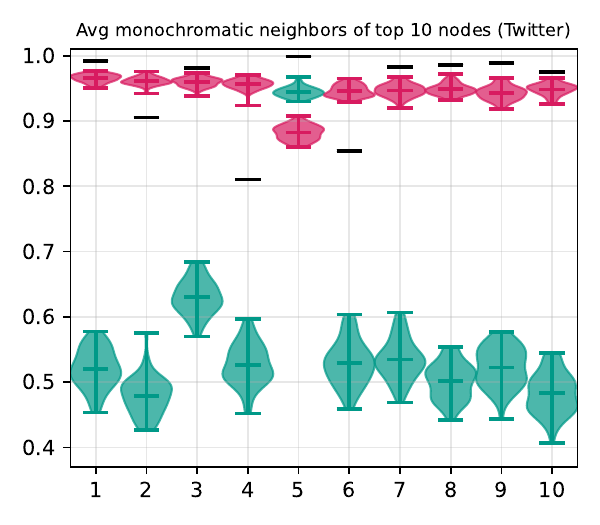} 
\end{subfigure}
\begin{subfigure}{.3\textwidth}
  \centering
  \includegraphics[width=\textwidth]{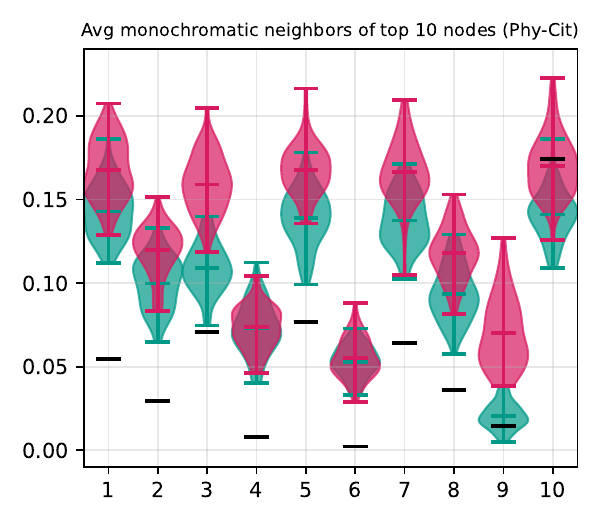} 
\end{subfigure}
\begin{subfigure}{.3\textwidth}
  \centering
  \includegraphics[width=\textwidth]{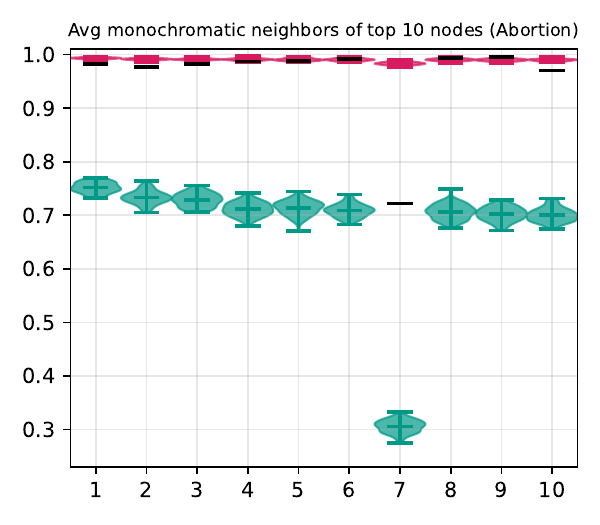} 
\end{subfigure}
\begin{subfigure}{.3\textwidth}
  \centering
  \includegraphics[width=\textwidth]{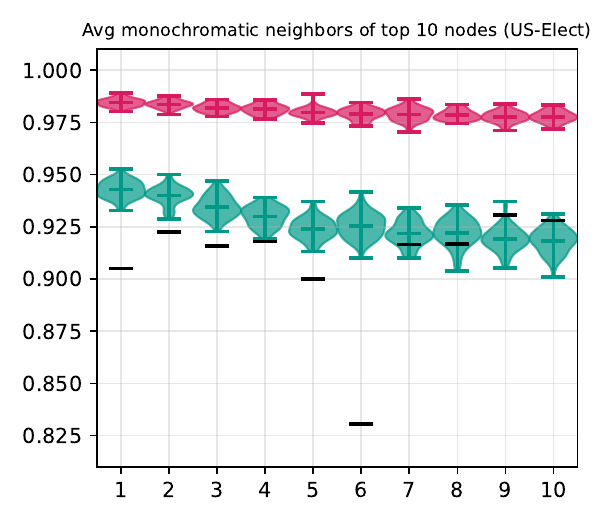} 
\end{subfigure}
\begin{subfigure}{.3\textwidth}
  \centering
  \includegraphics[width=\textwidth]{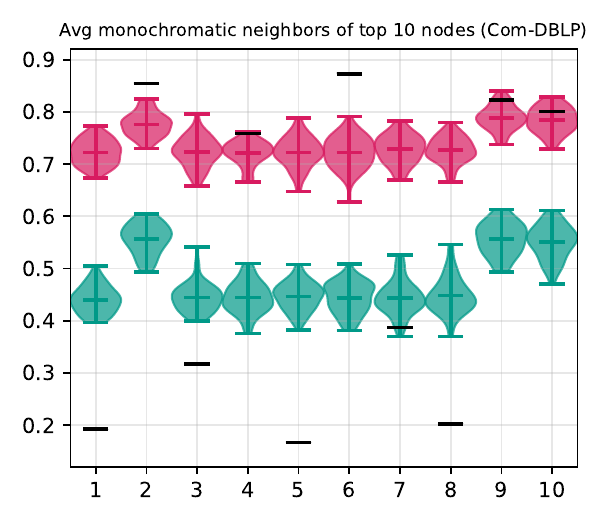} 
\end{subfigure}
\begin{subfigure}{.3\textwidth}
  \centering
  \includegraphics[width=\textwidth]{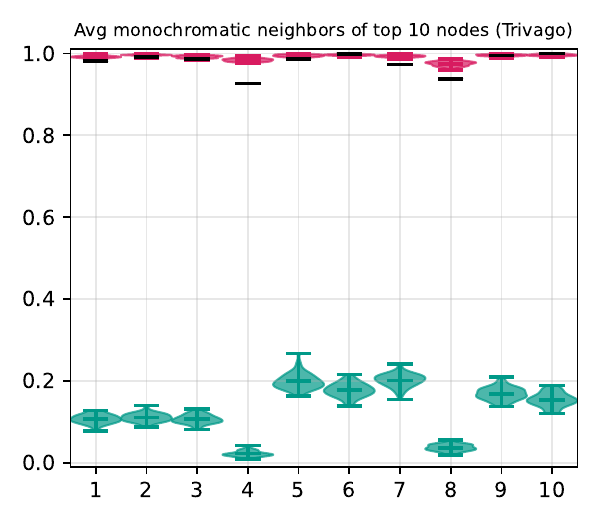} 
\end{subfigure}
\begin{subfigure}{.3\textwidth}
  \centering
  \includegraphics[width=\textwidth]{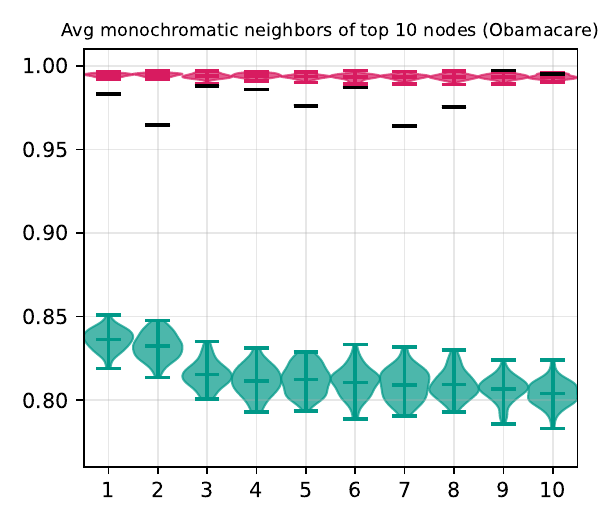} 
\end{subfigure}
\begin{subfigure}{.3\textwidth}
  \centering
  \includegraphics[width=\textwidth]{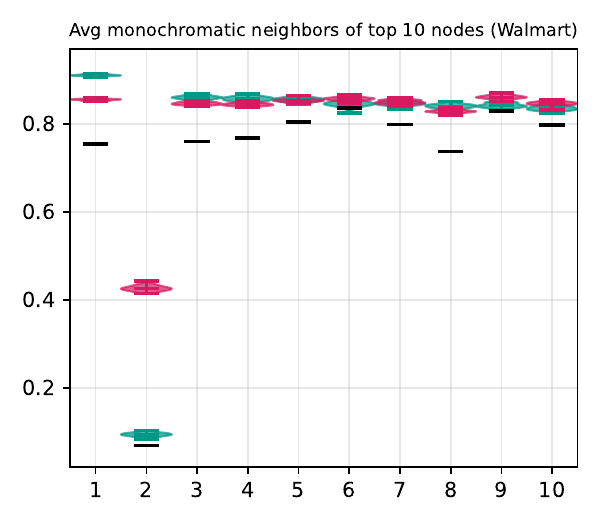} 
\end{subfigure}
\begin{subfigure}{.3\textwidth}
  \centering
  \includegraphics[width=\textwidth]{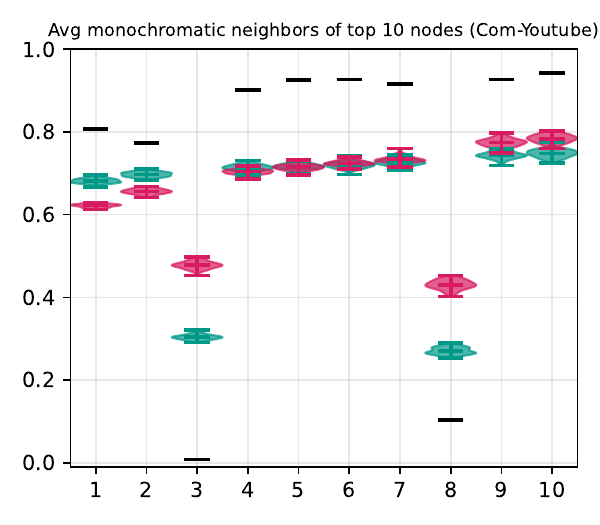} 
\end{subfigure}
\begin{subfigure}{.3\textwidth}
  \centering
  \includegraphics[width=\textwidth]{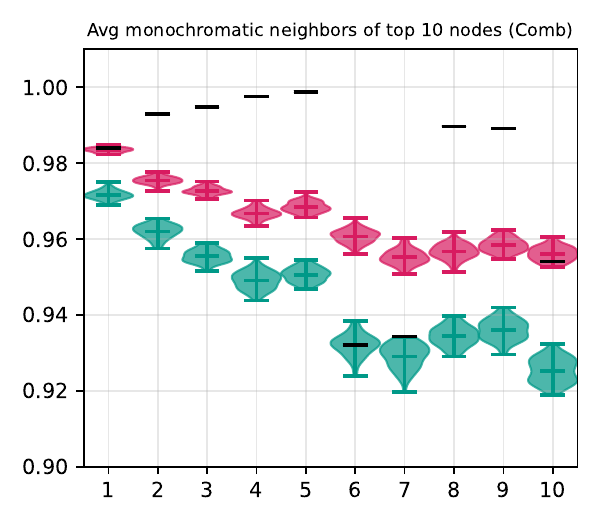} 
\end{subfigure}
\caption{ 
Values of $M_v$ for the $10$ nodes with highest degree, on the observed graphs vs random samples.
}
\Description{This figure shows several plots, reporting the values of $M_v$ for the $10$ nodes with highest degree, on the observed graphs vs random samples.}
\label{fig:avgcolcmappendix}
\end{figure*}

\begin{figure*}[ht]
\begin{subfigure}{.17\textwidth}
  \centering
  \includegraphics[width=\textwidth]{./figures/label-ccm-ccmb.pdf}
\end{subfigure} \\
\begin{subfigure}{.24\textwidth}
  \centering
  \includegraphics[width=\textwidth]{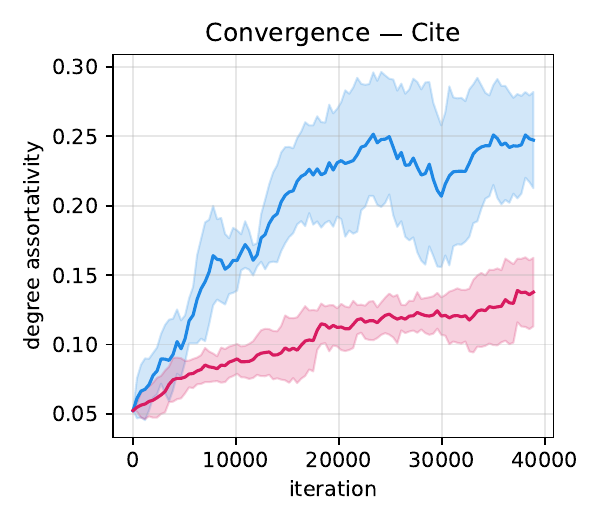} 
\end{subfigure}
\begin{subfigure}{.24\textwidth}
  \centering
  \includegraphics[width=\textwidth]{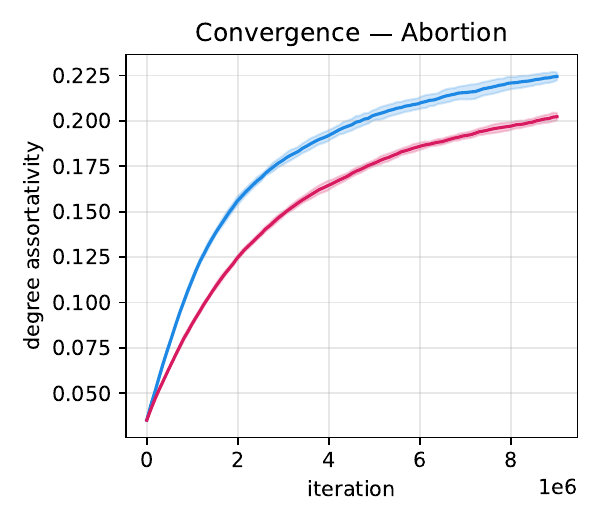} 
\end{subfigure}
\begin{subfigure}{.24\textwidth}
  \centering
  \includegraphics[width=\textwidth]{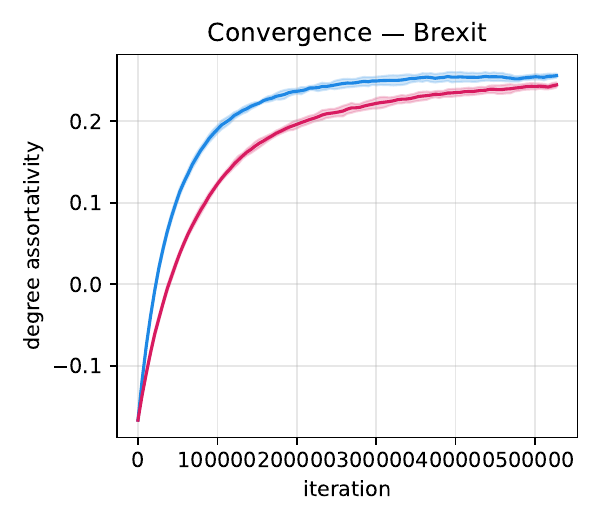}
\end{subfigure}
\begin{subfigure}{.24\textwidth}
  \centering
  \includegraphics[width=\textwidth]{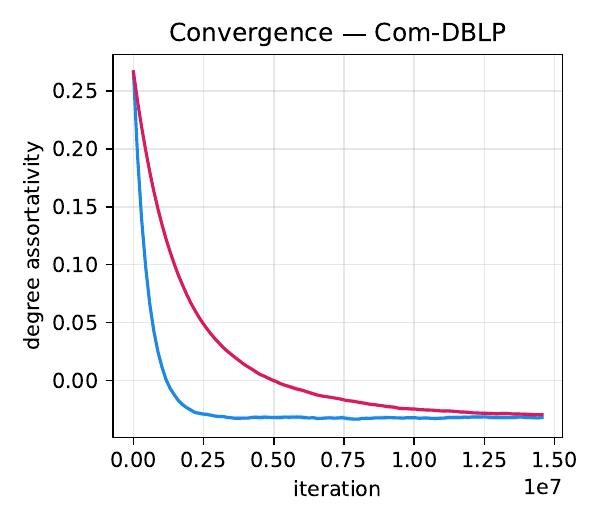}
\end{subfigure}
\begin{subfigure}{.24\textwidth}
  \centering
  \includegraphics[width=\textwidth]{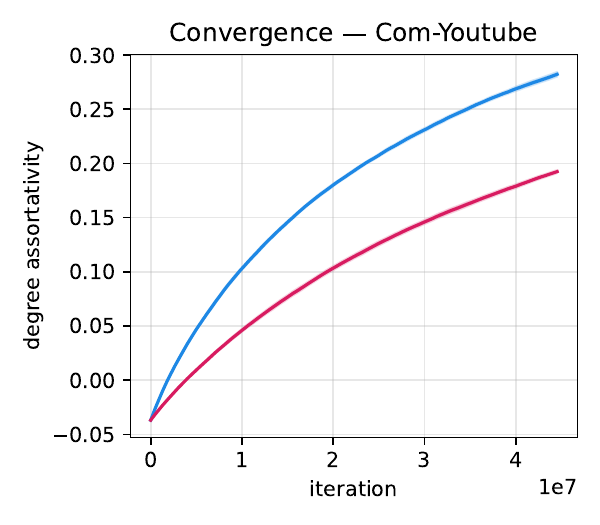}
\end{subfigure}
\begin{subfigure}{.24\textwidth}
  \centering
  \includegraphics[width=\textwidth]{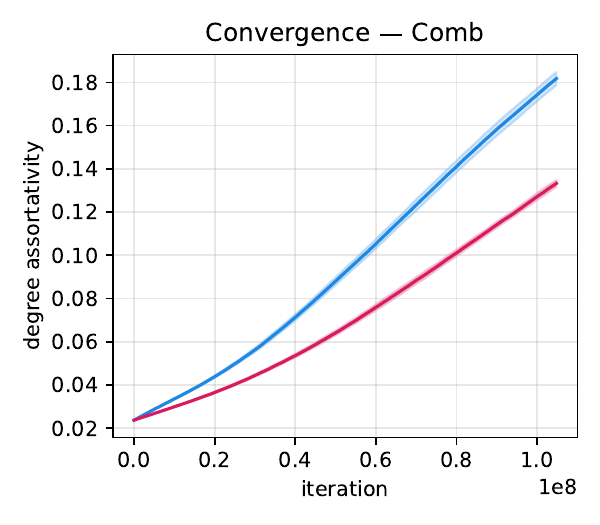}
\end{subfigure}
\begin{subfigure}{.24\textwidth}
  \centering
  \includegraphics[width=\textwidth]{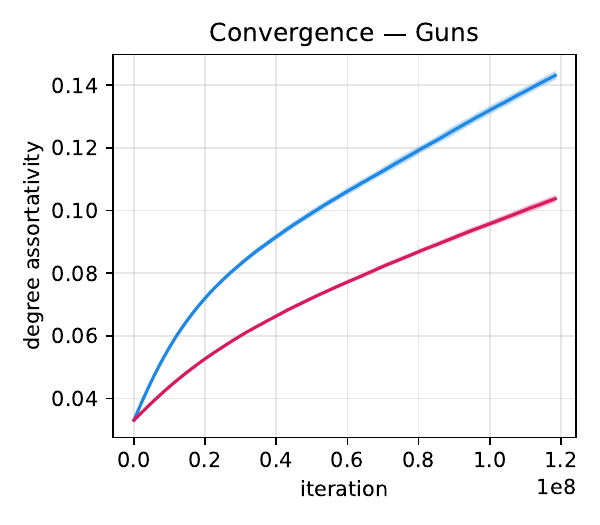}
\end{subfigure}
\begin{subfigure}{.24\textwidth}
  \centering
  \includegraphics[width=\textwidth]{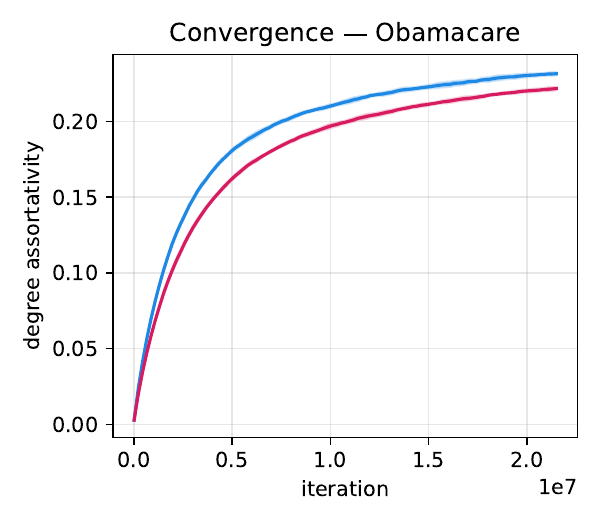}
\end{subfigure}
\begin{subfigure}{.24\textwidth}
  \centering
  \includegraphics[width=\textwidth]{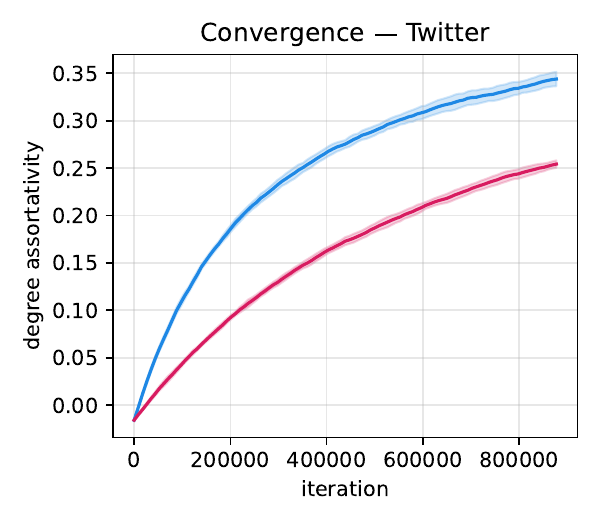}
\end{subfigure}
\begin{subfigure}{.24\textwidth}
  \centering
  \includegraphics[width=\textwidth]{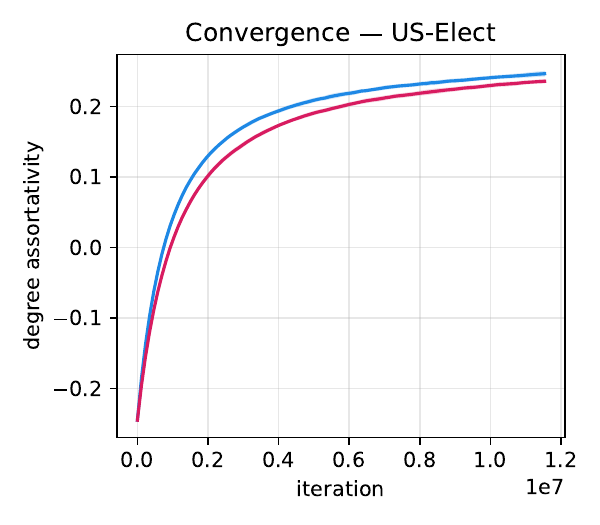}
\end{subfigure}
\begin{subfigure}{.24\textwidth}
  \centering
  \includegraphics[width=\textwidth]{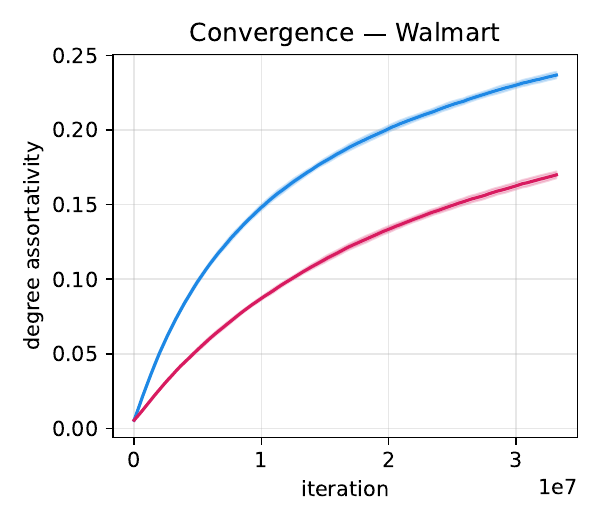}
\end{subfigure}
\caption{  Degree assortativity of \algname\ and \algname-B up to $m \ln m$ iterations.
}
\Description{This figure shows the degree assortativity of \algname\ and \algname-B up to $m \ln m$ iterations.}
\label{fig:convergenceccmappx}
\end{figure*}

\begin{figure*}[ht]
\begin{subfigure}{.17\textwidth}
  \centering
  \includegraphics[width=\textwidth]{./figures/label-ccm-ccmb.pdf}
\end{subfigure} \\
\begin{subfigure}{.24\textwidth}
  \centering
  \includegraphics[width=\textwidth]{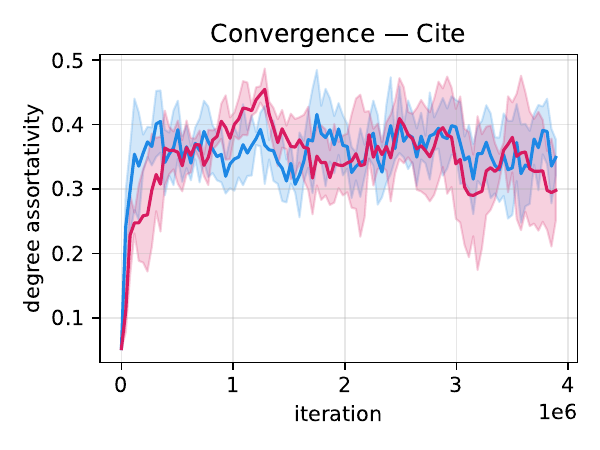} 
\end{subfigure}
\begin{subfigure}{.23\textwidth}
  \includegraphics[width=\textwidth]{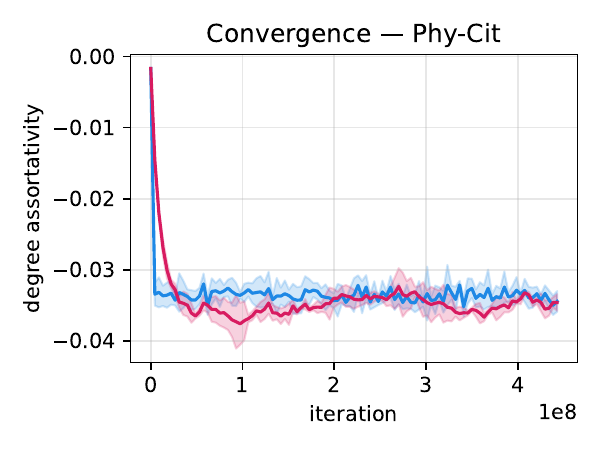}
\end{subfigure}
\begin{subfigure}{.24\textwidth}
  \centering
  \includegraphics[width=\textwidth]{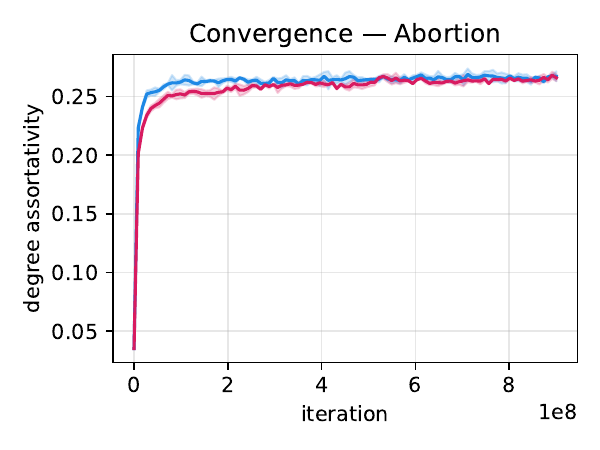} 
\end{subfigure}
\begin{subfigure}{.24\textwidth}
  \centering
  \includegraphics[width=\textwidth]{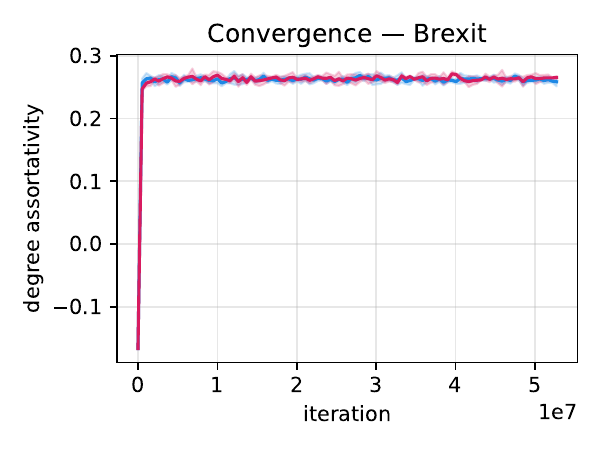}
\end{subfigure}
\begin{subfigure}{.24\textwidth}
  \centering
  \includegraphics[width=\textwidth]{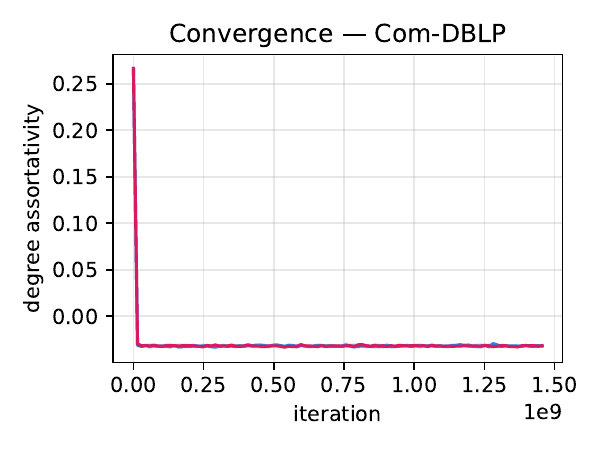}
\end{subfigure}
\begin{subfigure}{.24\textwidth}
  \centering
  \includegraphics[width=\textwidth]{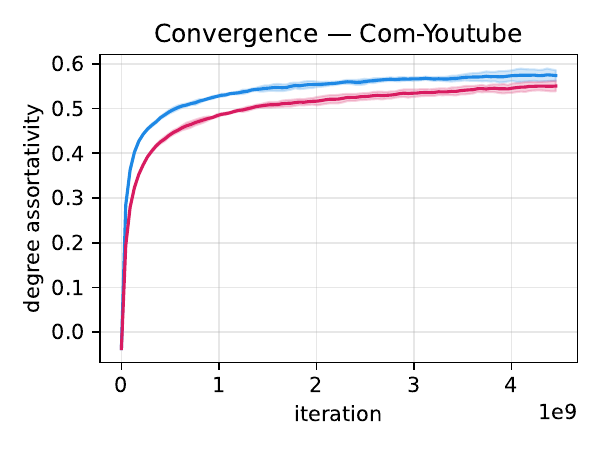}
\end{subfigure}
\begin{subfigure}{.24\textwidth}
  \centering
  \includegraphics[width=\textwidth]{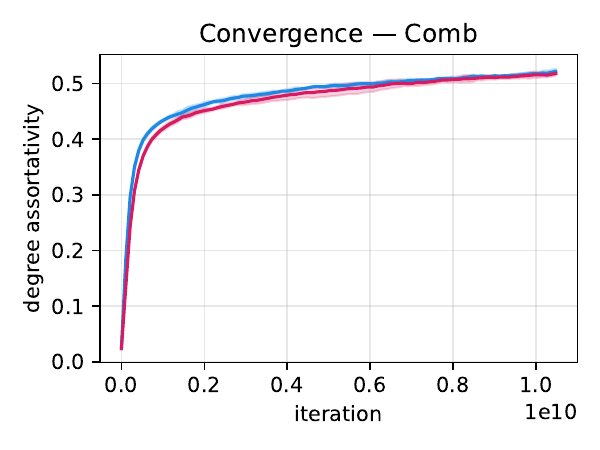}
\end{subfigure}
\begin{subfigure}{.24\textwidth}
  \centering
  \includegraphics[width=\textwidth]{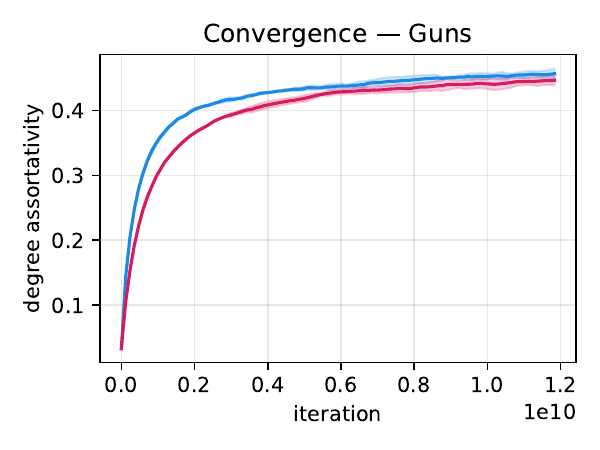}
\end{subfigure}
\begin{subfigure}{.24\textwidth}
  \centering
  \includegraphics[width=\textwidth]{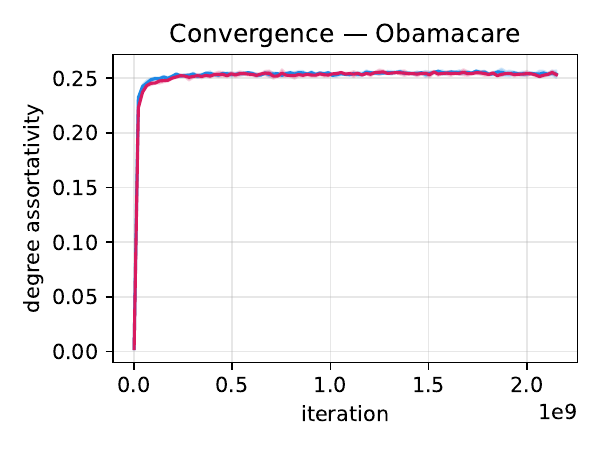}
\end{subfigure}
\begin{subfigure}{.24\textwidth}
  \centering
  \includegraphics[width=\textwidth]{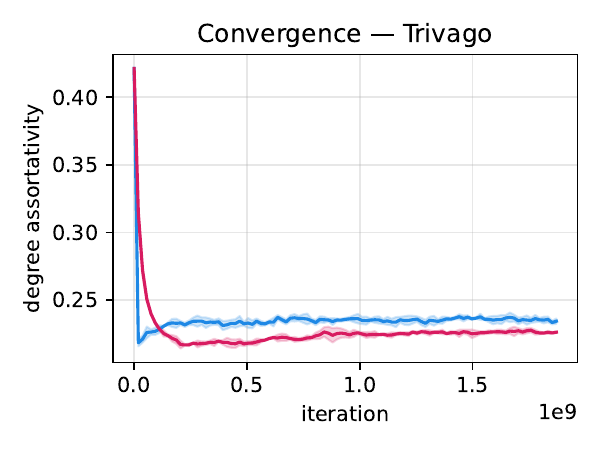}
\end{subfigure}
\begin{subfigure}{.24\textwidth}
  \centering
  \includegraphics[width=\textwidth]{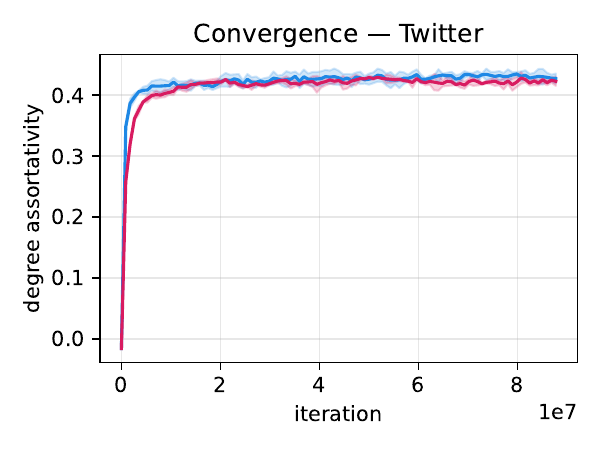}
\end{subfigure}
\begin{subfigure}{.24\textwidth}
  \centering
  \includegraphics[width=\textwidth]{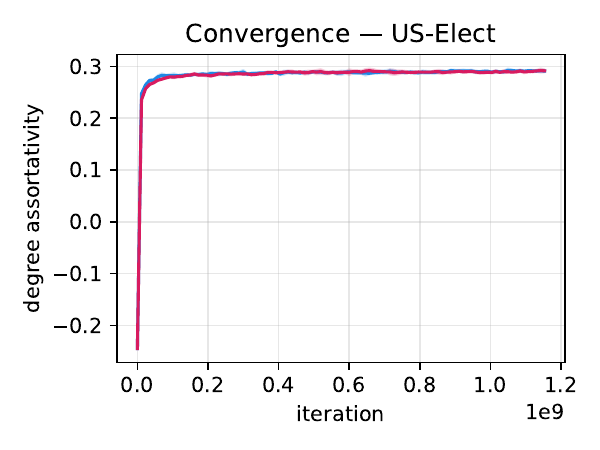}
\end{subfigure}
\begin{subfigure}{.24\textwidth}
  \centering
  \includegraphics[width=\textwidth]{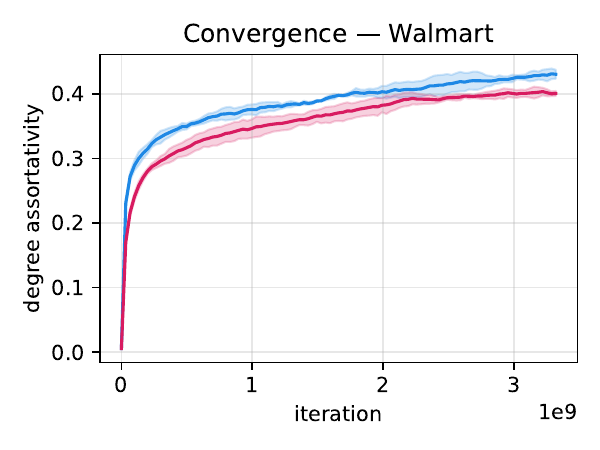}
\end{subfigure}
\caption{  Degree assortativity of \algname\ and \algname-B up to $10^2 m \ln m$ iterations.
}
\Description{This figure shows the degree assortativity of \algname\ and \algname-B up to $10^2 m \ln m$ iterations.}
\label{fig:convergenceccmappxmoreit}
\end{figure*}

\begin{figure*}[ht]
\begin{subfigure}{.65\textwidth}
  \centering
  \includegraphics[width=.7\textwidth]{./figures/labels-iterations-stats.pdf}
\end{subfigure}  \\
\begin{subfigure}{.195\textwidth}
  \includegraphics[width=\textwidth]{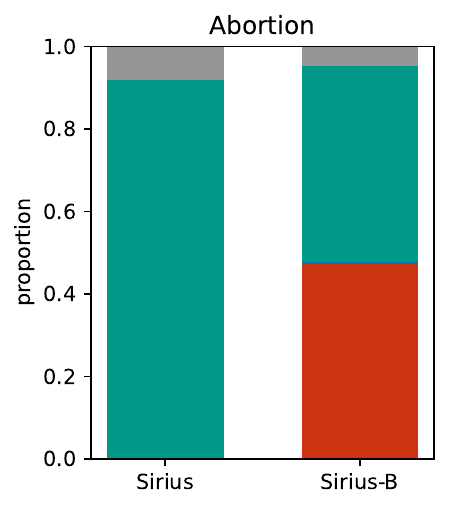}
\end{subfigure}
\begin{subfigure}{.195\textwidth}
  \includegraphics[width=\textwidth]{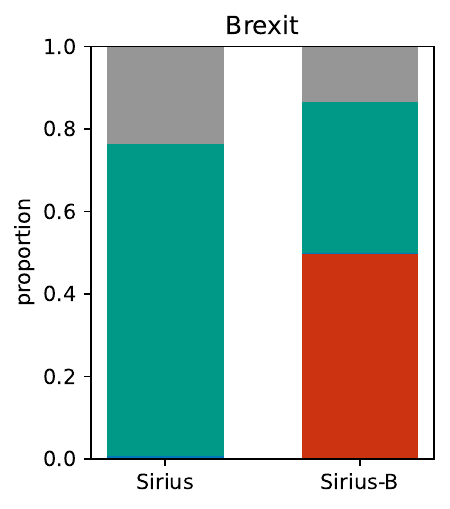}
\end{subfigure}
\begin{subfigure}{.195\textwidth}
  \centering
  \includegraphics[width=\textwidth]{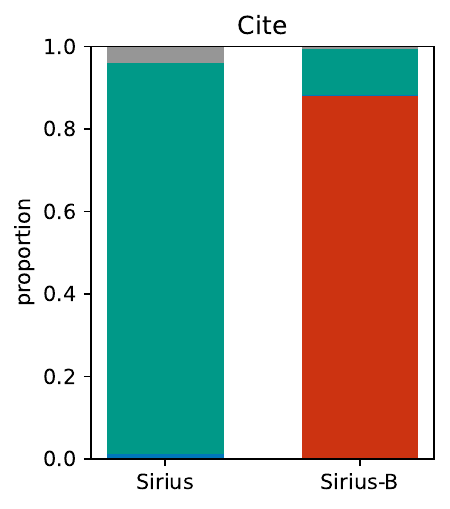} 
\end{subfigure}
\begin{subfigure}{.195\textwidth}
  \centering
  \includegraphics[width=\textwidth]{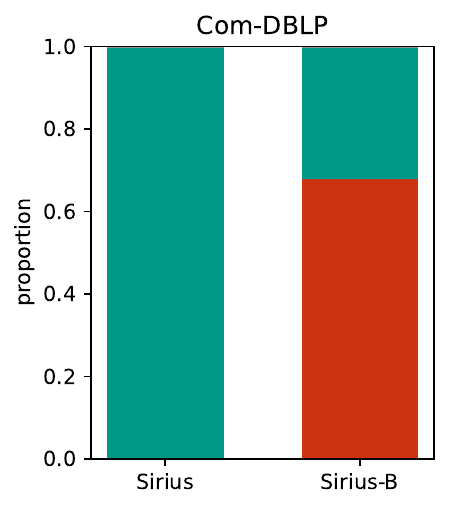}
\end{subfigure}
\begin{subfigure}{.195\textwidth}
  \centering
  \includegraphics[width=\textwidth]{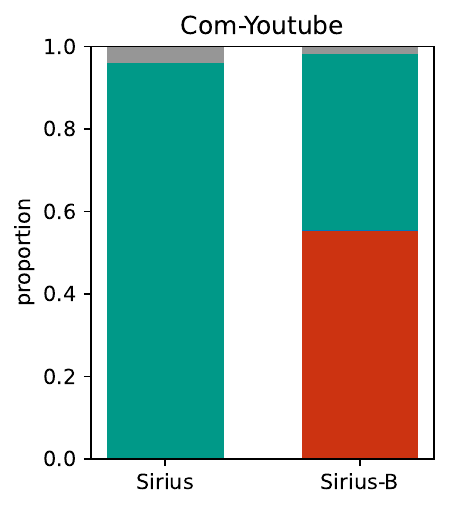}
\end{subfigure}
\begin{subfigure}{.195\textwidth}
  \centering
  \includegraphics[width=\textwidth]{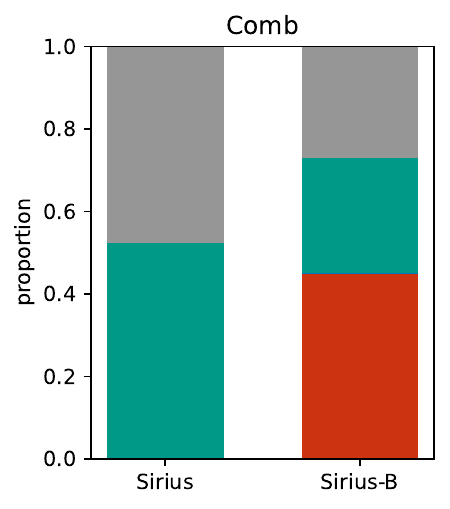}
\end{subfigure}
\begin{subfigure}{.195\textwidth}
  \includegraphics[width=\textwidth]{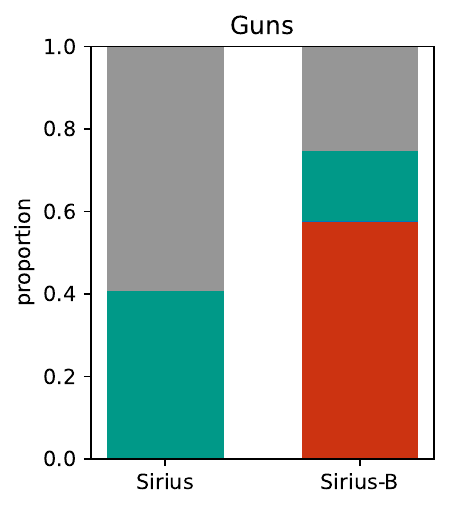}
\end{subfigure}
\begin{subfigure}{.195\textwidth}
  \centering
  \includegraphics[width=\textwidth]{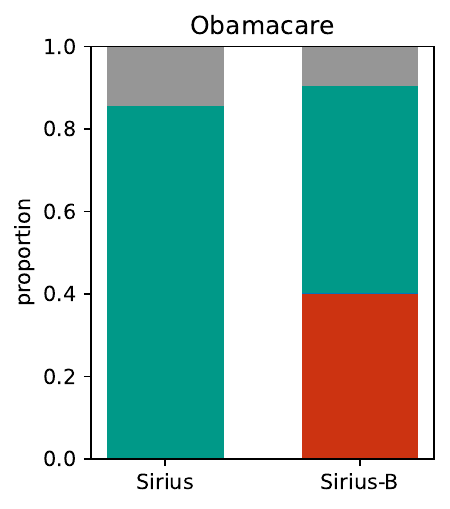} 
\end{subfigure}
\begin{subfigure}{.195\textwidth}
  \centering
  \includegraphics[width=\textwidth]{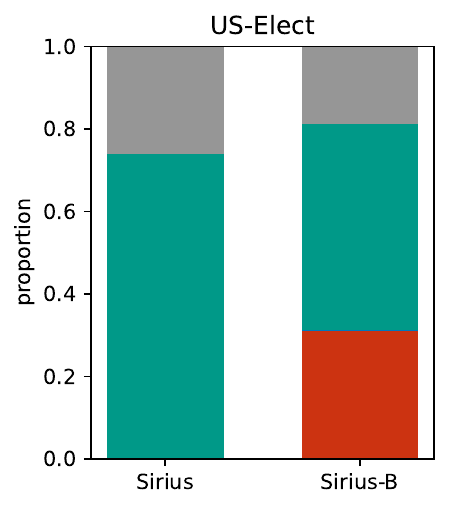}
\end{subfigure}
\begin{subfigure}{.195\textwidth}
  \centering
  \includegraphics[width=\textwidth]{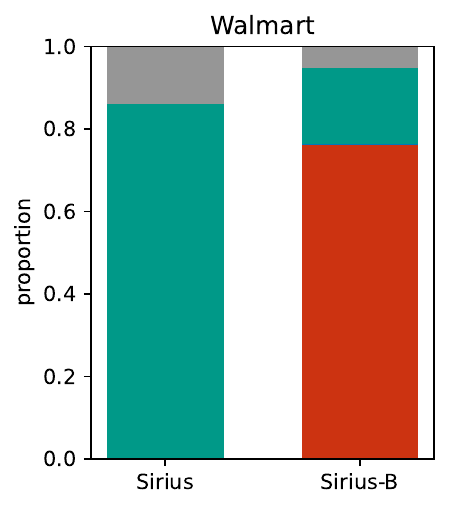}
\end{subfigure}
\caption{  Iteration outcomes for \algname\ and \algname-B. 
}
\Description{This figure shows iteration outcomes for \algname\ and \algname-B.}
\label{fig:iterstatsappx}
\end{figure*}

\begin{figure*}[ht]
\begin{subfigure}{.48\textwidth}
  \centering
  \includegraphics[width=.4\textwidth]{./figures/label-ccm-ccmb.pdf}
\end{subfigure}
\begin{subfigure}{.48\textwidth}
  \centering
  \includegraphics[width=0.6\textwidth]{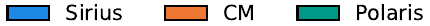}
\end{subfigure} \\
\begin{subfigure}{.415\textwidth}
  \centering
  \includegraphics[width=\textwidth]{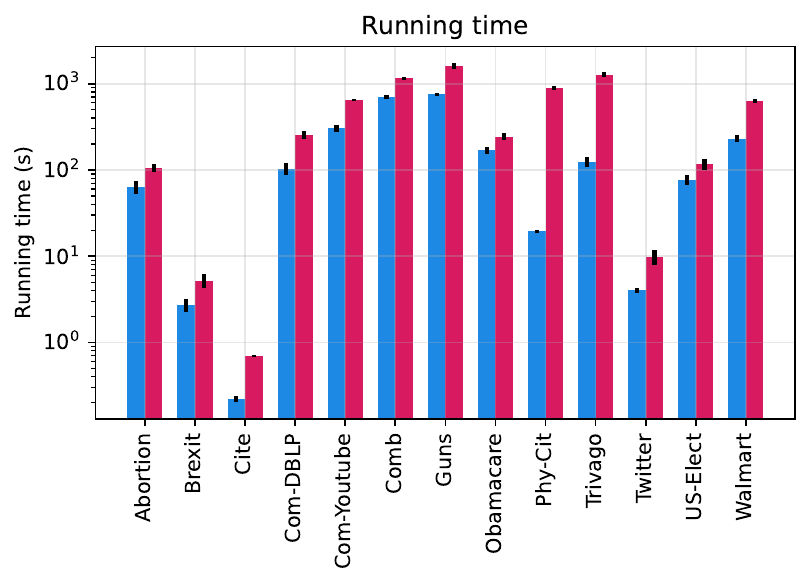} 
    \caption{}
\end{subfigure}
\begin{subfigure}{.45\textwidth}
  \centering
  \includegraphics[width=\textwidth]{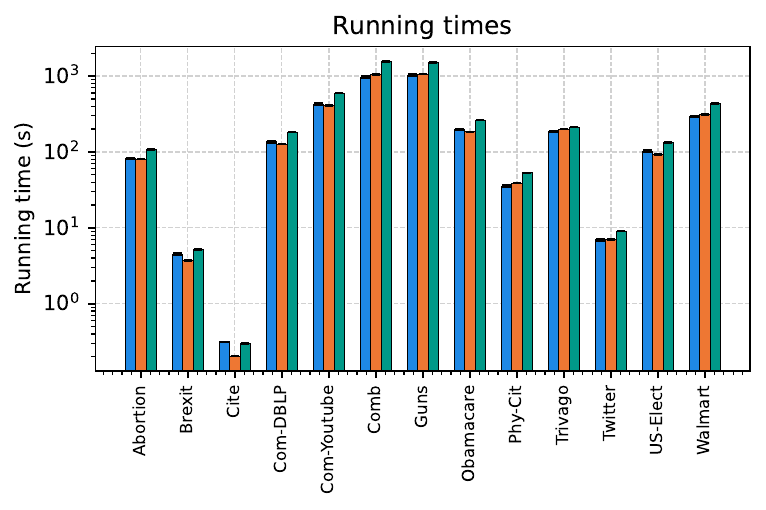} 
    \caption{}
\end{subfigure}
\caption{  Running time comparison between \algname\ and \algname-B (a) 
and with CM and Polaris (b). 
}
\Description{This figure shows the running time comparison between \algname\ and \algname-B (a) 
and with CM and Polaris (b).}
\label{fig:runtimesall}
\end{figure*}

\fi

\end{document}
\endinput